\newcommand{\argmin}{\mathop{\rm argmin}\limits}
\theoremstyle{definition}
\newtheorem{theorem}{Theorem}[section]
\newtheorem{corollary}[theorem]{Corollary}
\newtheorem{proposition}[theorem]{Proposition}
\newtheorem{lemma}[theorem]{Lemma}
\newtheorem{remark}[theorem]{Remark}
\title{Universality of the anomalous enstrophy dissipation at the collapse of three point vortices on Euler-Poincar\'{e} models}
\author{Takeshi Gotoda \footnote{Research Institute for Electric Science, Hokkaido University, Kita 12 Nishi 7, Kita-ku, Sapporo,  Hokkaido, JAPAN  E-mail : gotoda@es.hokudai.ac.jp} 
and Takashi Sakajo \footnote{Department of mathematics, Kyoto University, Kitashirakawa Oiwake-cho, Sakyo-ku, Kyoto, JAPAN  E-mail : sakajo@math.kyoto-u.ac.jp} 
}
\date{}
\begin{document}

\maketitle

\begin{abstract}
Anomalous enstrophy dissipation of incompressible flows in the inviscid limit is a significant property characterizing two-dimensional turbulence. It indicates that the investigation of non-smooth incompressible and inviscid flows contributes to the theoretical understanding of turbulent phenomena.  In the preceding study~\cite{G.2}, a unique global weak solution to the Euler-$\alpha$ equations, which is a regularized Euler equations, for point-vortex initial data is considered, and thereby it has been shown that, as $\alpha \rightarrow 0$, the evolution of three point vortices converges to a self-similar collapsing orbit dissipating the enstrophy  in the sense of distributions at the critical time. In the present paper, to elucidate whether or not this singular orbit can be constructed independently on the regularization method, we consider a functional generalization of the Euler-$\alpha$ equations,  called the \textit{Euler-Poincar\'{e} models}, in which the incompressible velocity field is dispersively regularized by a smoothing function. We provide a sufficient condition for the existence of the singular orbit, which is applicable to many smoothing functions. As examples, we confirm that the condition is satisfied with the Gaussian regularization and  the vortex-blob regularization that are both utilized in the numerical scheme solving the Euler equations.
Consequently, the enstrophy dissipation via the collapse of three point vortices is a generic phenomenon that is not specific to the Euler-$\alpha$ equations but universal within the Euler-Poincar\'{e} models.
\end{abstract}


\section{Introduction}
In the description of two-dimensional (2D) turbulent flows at high Reynolds number, there appears a remarkable discrepancy in flow regularity between viscous flows 
in their inviscid limit and non-viscous ones. That is to say,  smooth solutions to the 2D incompressible Euler equations conserve both of the energy and the 
enstrophy, which are the $L^2$ norms of the velocity field and the scalar vorticity. On the other hand, it has been reported in \cite{Batchelor, Kraichnan, Leith} 
that the conservation of the energy and the dissipation of the enstrophy  in the inviscid limit give rise to the inertial range of the energy 
density spectrum corresponding to the backward energy cascade and the forward enstrophy cascade in 2D turbulence. The discrepancy strongly insists that turbulent flows
subject to the 2D Navier-Stokes equations converge to non-smooth flows governed by the 2D Euler equations as the Reynolds number gets infinitely large.  Hence,
the investigation of such singular flows plays a crucial role in the theoretical understanding of 2D fluid turbulence.

The first mathematical attempt to tackle this problem starts with constructing non-smooth weak solutions to the Euler equations dissipating the enstrophy. 
The global existence of a unique weak solution has been established for the initial vorticity distributions $\omega_0 \in L^1(\mathbb{R}^2) \cap L^p(\mathbb{R}^2)$ 
with $1<p \leq \infty$ \cite{Diperna, Marchioro, Yudovich}. However, it has been unfortunately shown in \cite{Eyink(a)} that weak solutions to the Euler equations for $\omega_0 \in L^1(\mathbb{R}^2)\cap L^p(\mathbb{R}^2)$ with  $p > 2$ can not dissipate the enstrophy in the sense of distributions. Therefore, it is necessary to deal with  vorticity distributions
with a weaker regularity such as  distributions in  the space of finite Radon measures $\mathcal{M}(\mathbb{R}^2)$ on $\mathbb{R}^2$.  In spite that the existence result has
been extended to the case of  $\omega_0 \in \mathcal{M}(\mathbb{R}^2)$ with a distinguished sign \cite{Delort, Majda}, it is still required that the velocity field induced by the 
vorticity distributions belongs to $L_{loc}^2(\mathbb{R}^2)$.  Consequently, if the vorticity distribution is 
represented by a $\delta$-measure, called a \textit{point vortex},  for instance, it is difficult to construct a unique global weak solution to the 2D Euler equations for this initial data,
  since its inducing velocity field is no longer the element of $L^2_{loc}(\mathbb{R}^2)$.
 To overcome this mathematical difficulty,  we  regularize incompressible velocity fields by introducing a smoothing function with a parameter $\varepsilon$.
 If we successfully construct a unique global weak solution to the equations for the regularized velocity fields, we shall obtain a non-smooth incompressible and inviscid flow dissipating the 
 enstrophy by taking the $\varepsilon \rightarrow 0$ limit of this weak solution.
 
 An example of such regularized Euler equations is the Euler-$\alpha$ equations, where $\alpha>0$ is the smoothing parameter. The Navier-Stokes-$\alpha$ and the Euler-$\alpha$
 equations are originally derived as models of 2D turbulence~\cite{Foias(a), Lunasin}. The existence of a unique global
 weak solution to the 2D Euler-$\alpha$ equations for $N$ point-vortex initial data, referred to as \textit{$\alpha$-point vortices},
 has been shown in \cite{G.2}. Then the evolution of the weak solution can be described in terms of the dynamics of those $\alpha$-point vortices.
 It was discovered in \cite{Sakajo}, and it has recently been  made mathematically rigorous in \cite{G.2}, that 
 under a certain circumstance, the evolution of three $\alpha$-point vortices converges to a self-similar collapsing orbit in finite time as
 $\alpha \rightarrow 0$ and the variational part of the enstrophy dissipates in the sense of distributions at the event of collapse. In addition, it has also been revealed that
this is a singular mechanism that gives rise to the \textit{irreversibility of time} in conservative systems.
 
 Another important regularization appears in the numerical scheme to solve the 2D Euler equations, which is known as the \textit{vortex blob method} \cite{Anderson, Chorin, Krasny}.
 In this scheme, descretizing initial smooth vorticity distributions by a set of many point vortices, we approximate the evolution of the vorticity distributions
 with those of the point vortices, in which the regularized velocity field induced by a point vortex at ${\bm x}_0$ is given by
 \[
 {\bm u}^{\sigma}({\bm x}) = \frac{1}{2\pi}\frac{({\bm x}-{\bm x}_0)^\perp}{\vert {\bm x}-{\bm x}_0 \vert^2 + \sigma^2}.
\] 
Here, $\sigma$ denotes the smoothing parameter. As $\sigma \rightarrow 0$, we remark that the regularized velocity ${\bm u}^{\sigma}({\bm x})$ tends to a singular
velocity field that does not belong to $L^2_{loc}(\mathbb{R}^2)$. 

 Here arises a natural question which we are concerned with in the present paper: 
 Is the anomalous enstrophy dissipation via the triple collapse found in the Euler-$\alpha$ equations as $\alpha \rightarrow 0$  obtained similarly for 
 the flows regularized by the vortex blob method as $\sigma \rightarrow 0$? This is not only a theoretical
 extension of the preceding study~\cite{G.2}, but it should also be figured out whether or not the anomalous enstrophy dissipation can be constructed
 regardless of the regularization method.

As a matter of fact, these two regularizations of the incompressible velocity fields are generalized in a unified manner, which is called the \textit{Euler-Poincar\'{e}} (EP) 
system~\cite{Holm(a), Holm(b)}. It is derived from an application of Hamilton's principle to a dispersive kinetic energy  action function with a
 smoothing parameter $\varepsilon$. And the EP system is formally equivalent to the Euler equations when $\varepsilon$ is exactly zero.
The existence of the unique global solutions to the Euler-Poincar\'{e} equations for initial vorticity distributions in $\mathcal{M}(\mathbb{R}^2)$ has been established in \cite{G.3} 
as to the Euler-$\alpha$ equations. Furthermore, since the Euler-Poincar\'{e} equations share common mathematical structures with the Euler-$\alpha$ equations, 
the evolution of the weak solution can be investigated in terms of the dynamics of \textit{$\varepsilon$-point vortices}, which is introduced in this paper.
 Accordingly, one expects that the motion of the 
three $\varepsilon$-point vortices gives rise to the anomalous enstrophy dissipation via the self-similar collapse as we have shown in \cite{G.2}. On the other hand, in the 
Euler-Poincar\'{e} models, the enstrophy and the energy varying with the evolution of $\varepsilon$-point vortices are represented by Fourier transforms in terms of the smoothing 
function unlike the Euler-$\alpha$  equations where it is represented by elementary functions, which makes the mathematical treatment difficult.

 The paper is organized as follows. In Section~\ref{EP-eq}, we introduce the Euler-Poincar\'{e} equations,  and the existence and uniqueness theorem is stated. 
 We then give a mathematical formulation of the Euler-Poincar\'{e} point-vortex  (EP-PV) system in Section~\ref{EPPV} and its associated  enstrophy and energy variations 
  are defined in  Section~\ref{Variations}. After introducing the three $\varepsilon$-point vortex problem  in Section~\ref{three-vortex},  we summarize the main results in Section~\ref{Main},
  in which we provide a sufficient condition  for the emergence of the anomalous enstrophy dissipation. Sections~\ref{proof} and~\ref{proof-cor} are devoted to the proofs of the main results.  In Section~\ref{VBM}, we show that the enstrophy dissipation occurs for the flows regularized by the Gaussian kernel and by the vortex-blob method as applications of the main results.  Final section is concluding remarks. Appendix~\ref{functions} provides some properties of auxiliary functions, which play essential role in the proof of the main results.

\section{The Euler-Poincar\'{e} system }
\subsection{The Euler-Poincar\'{e} equations}
\label{EP-eq}
We derive a regularized 2D Euler equation, called the \textit{Euler-Poincar\'e} equation,  for incompressible velocity fields based on the framework of \cite{Foias(a), Holm(c)}. 
For an incompressible velocity field ${\bm v}$, let
us define ${\bm u}^\varepsilon$ by 
\begin{equation}
{\bm u}^\varepsilon({\bm x}) = \left( h^\varepsilon \ast {\bm v} \right) ({\bm x}) = \int_{\mathbb{R}^2} h^\varepsilon\left( {\bm x} - {\bm y} \right) {\bm v}({\bm y}) d{\bm y}, \label{Rvelo}
\end{equation}
in which a smoothing function $h^\varepsilon$ is  given by 
\begin{equation}
h^\varepsilon({\bm x}) = \frac{1}{\varepsilon^2} h\left( \frac{{\bm x}} {\varepsilon} \right)    \label{h_eps-h}
\end{equation}
for a scalar function $h({\bm x})$ on $\mathbb{R}^2$. We assume that $h$ is an integrable function and may have a singularity at the origin. 
Since ${\bm u}^\varepsilon$ is smoother than ${\bm v}$ owing to its definition, we call ${\bm u}^\varepsilon$ and ${\bm v}$ a \textit{regular velocity} and
 a \textit{singular velocity}, respectively. In a similar manner, we define the \textit{singular vorticity} $q$ and the \textit{regular vorticity} $\omega^\varepsilon$ by
$q = \operatorname{curl} {\bm v}$ and $\omega^\varepsilon = \operatorname{curl} {\bm u}^\varepsilon$.
Let us remark that $\operatorname{div} {\bm u}^\varepsilon = 0$ and $\omega^\varepsilon = h^\varepsilon \ast q$ when the convolution commutes with the differential operator.
 Then the Euler-Poincar\'{e} equations for $({\bm u}^\varepsilon, {\bm v})$ in $\mathbb{R}^2$ are given by
 \begin{equation}
\partial_t {\bm v} + ({\bm u}^\varepsilon \cdot \nabla) {\bm v} - (\nabla {\bm v})^T \cdot {\bm u}^\varepsilon - \nabla \Pi = 0,\qquad  \operatorname{div} {\bm u}^\varepsilon = \operatorname{div} {\bm v}=0, \label{REE}
\end{equation}
where $\Pi$ is a generalized pressure. The first momentum equation is derived from the Hamiltonian
\begin{equation*}
\mathscr{H} = \frac{1}{2} \int_{\mathbb{R}^2} {\bm v}({\bm x}) \cdot {\bm u}^\varepsilon({\bm x}) d{\bm x}
\end{equation*}
subject to the divergence-free condition through Hamilton's principle. Taking the curl of (\ref{REE}), we obtain the transport equations 
for the singular vorticity advected by the regular velocity: 
\begin{equation}
\partial_t q + ({\bm u}^\varepsilon \cdot \nabla) q = 0, \qquad {\bm u}^\varepsilon = {\bm K}^\varepsilon \ast q, \qquad {\bm K}^\varepsilon = {\bm K} \ast h^\varepsilon,  \label{RPDE}
\end{equation}
where  ${\bm u}^\varepsilon = {\bm K} \ast \omega^\varepsilon$ owing to the Biot-Savart formula.
It has been shown in \cite{G.3} that the initial value problem of (\ref{RPDE}) has a unique global weak solution in the space of Radon measures $\mathcal{M}(\mathbb{R}^2)$ on
$\mathbb{R}^2$,
in which the following equation for the Lagrangian flow map ${\bm \eta}^\varepsilon$  associated with the regular velocity is considered.
\begin{equation}
\partial_t {\bm \eta}^\varepsilon({\bm x}, t) = {\bm u}^\varepsilon \left( {\bm \eta}^\varepsilon({\bm x}, t), t \right), \qquad  {\bm \eta}^\varepsilon({\bm x}, 0) = {\bm x}. \label{EP-flow}
\end{equation}
The solution of (\ref{EP-flow}) yields that of (\ref{RPDE}) with the initial vorticity $q_0$ as follows.
\begin{equation}
q({\bm x}, t) = q_0 \left( {\bm \eta}^\varepsilon({\bm x}, - t) \right). 
\end{equation}
Here, the following functions  are introduced to characterize singularities and decay rates of functions. 
\begin{equation*}
\chi_{\log}^{-} ({\bm x})  = \left\{
\begin{array}{cc}
\displaystyle{ \left( 1 - \log{\vert{\bm x}\vert} \right)^{-1} }  &, \ \vert{\bm x}\vert \leq 1 , \\
\displaystyle{ 0 }  &, \ \vert{\bm x}\vert > 1,
\end{array}
\right. \quad 
\chi^{+}_{\alpha}({\bm x}) = \left\{
\begin{array}{cc}
0   &,  \ \vert{\bm x}\vert \leq 1, \\
\vert{\bm x}\vert^\alpha  &,  \ \vert{\bm x}\vert > 1 .
\end{array}
\right. 
\end{equation*}
We also set $\chi_{\alpha}({\bm x}) = \vert{\bm x}\vert^\alpha$ for ${\bm x}\in \mathbb{R}^2$. Then the following theorem holds.

\begin{theorem} (\textbf{\cite{G.3}})
Suppose that $h \in C^1(\mathbb{R}^2) \cap W^1_1(\mathbb{R}^2)$ satisfies $\chi_1^{+} h \in L^1(\mathbb{R}^2)$ and
\begin{equation}
\chi_{\log}^{-} h \in L^\infty(\mathbb{R}^2), \qquad  \chi_1 \nabla h \in L^\infty(\mathbb{R}^2). \label{h-asympt}
\end{equation}
Then, for any initial vorticity $q_0 \in \mathcal{M}(\mathbb{R}^2)$, there exists a unique global weak solution of (\ref{RPDE}) such that 
${\bm \eta}^\varepsilon \in C^1(\mathbb{R};\mathscr{G})$, ${\bm u}^\varepsilon \in C(\mathbb{R}; C (\mathbb{R}^2; \mathbb{R}^2))$ and $q \in C(\mathbb{R};\mathcal{M}(\mathbb{R}^2))$,
where $\mathscr{G}$ denotes the group of homeomorphisms on $\mathbb{R}^2$ that preserve the Lebesgue measure. \label{well-posed}
\end{theorem}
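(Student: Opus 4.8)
The plan is to reduce the well-posedness of the transport problem (\ref{RPDE}) to the solvability of the flow-map ODE (\ref{EP-flow}), which in turn rests on establishing that the regularized Biot--Savart kernel ${\bm K}^\varepsilon = {\bm K} \ast h^\varepsilon$ is bounded and globally Lipschitz on $\mathbb{R}^2$. The hypotheses on $h$ are tailored precisely for this: the Biot--Savart kernel ${\bm K}$ behaves like $|{\bm x}|^{-1}$ both near the origin and at infinity, so to control ${\bm K}^\varepsilon({\bm x}) = \int_{\mathbb{R}^2} {\bm K}({\bm x}-{\bm y})\, h^\varepsilon({\bm y}) \, d{\bm y}$ one must absorb this singularity against $h^\varepsilon$. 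First I would split the convolution integral into a neighborhood of the singularity and its complement; the condition $\chi_{\log}^{-} h \in L^\infty(\mathbb{R}^2)$ bounds the mild (sub-logarithmic) growth of $h$ near the origin, guaranteeing the local piece is finite, while $\chi_1^{+} h \in L^1(\mathbb{R}^2)$ furnishes the decay needed for the far-field piece. Together these yield $\|{\bm K}^\varepsilon\|_{L^\infty} < \infty$.

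For the Lipschitz bound I would differentiate under the integral sign, moving the derivative onto the smoothing function to write $\nabla {\bm K}^\varepsilon = {\bm K} \ast \nabla h^\varepsilon$, and then estimate using $\chi_1 \nabla h \in L^\infty(\mathbb{R}^2)$ together with the $W^1_1$ and $L^1$ integrability of $h$; this gives $\|\nabla {\bm K}^\varepsilon\|_{L^\infty} < \infty$, hence a global Lipschitz constant $L_\varepsilon$ for ${\bm K}^\varepsilon$. The assumption $h \in C^1(\mathbb{R}^2) \cap W^1_1(\mathbb{R}^2)$ ensures these manipulations are justified and that ${\bm K}^\varepsilon \in C^1$. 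With the kernel estimates in hand, the induced velocity ${\bm u}^\varepsilon = {\bm K}^\varepsilon \ast q$ inherits, for any $q \in \mathcal{M}(\mathbb{R}^2)$, the bounds $\|{\bm u}^\varepsilon\|_{L^\infty} \le \|{\bm K}^\varepsilon\|_{L^\infty}\,\|q\|_{\mathcal{M}}$ and a spatial Lipschitz constant controlled by $L_\varepsilon \|q\|_{\mathcal{M}}$, uniformly as long as the total variation $\|q\|_{\mathcal{M}}$ stays bounded.

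Since the transport (\ref{RPDE}) preserves the total variation of $q$ (the flow being measure-preserving), these bounds hold for all time. I would then solve the flow-map ODE (\ref{EP-flow}) by the Cauchy--Lipschitz (Picard) theorem in the Banach space of bounded continuous vector fields: the global-in-space Lipschitz continuity of ${\bm u}^\varepsilon$ yields a unique global-in-time flow ${\bm \eta}^\varepsilon(\cdot, t)$, and a standard continuation argument rules out finite-time blow-up. Because $\operatorname{div} {\bm u}^\varepsilon = 0$, Liouville's theorem makes each ${\bm \eta}^\varepsilon(\cdot, t)$ a Lebesgue-measure-preserving homeomorphism, i.e. an element of $\mathscr{G}$, and differentiability in $t$ gives ${\bm \eta}^\varepsilon \in C^1(\mathbb{R}; \mathscr{G})$. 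Defining $q(\cdot, t)$ as the push-forward $({\bm \eta}^\varepsilon(\cdot, t))_\sharp\, q_0$ produces the weak solution, consistent with the representation $q({\bm x},t) = q_0({\bm \eta}^\varepsilon({\bm x}, -t))$, and the continuity statements ${\bm u}^\varepsilon \in C(\mathbb{R}; C(\mathbb{R}^2; \mathbb{R}^2))$ and $q \in C(\mathbb{R}; \mathcal{M}(\mathbb{R}^2))$ follow from the continuous dependence of the flow on time.

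The main obstacle is the first step: obtaining the uniform boundedness and, especially, the global Lipschitz bound for ${\bm K}^\varepsilon$ from the weak regularity hypotheses on $h$. The delicate point is that $h$ itself is permitted to be singular at the origin (only a sub-logarithmic bound is imposed via $\chi_{\log}^{-} h \in L^\infty$), so the convolution estimates near ${\bm x} = {\bm y}$ must be carried out carefully, balancing the $|{\bm x}|^{-1}$ singularity of ${\bm K}$ against the admissible singularity of $h$; likewise the far-field Lipschitz estimate requires the precise interplay of the decay condition $\chi_1^{+} h \in L^1$ and the gradient condition $\chi_1 \nabla h \in L^\infty$. Once these kernel bounds are secured, the remaining ODE and push-forward arguments are routine adaptations of the classical Yudovich-type theory for regularized point-vortex dynamics.
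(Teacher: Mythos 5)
The paper does not actually prove this theorem itself --- it is imported from \cite{G.3} --- but it does record the key fact about ${\bm K}^\varepsilon$ in Section~\ref{EPPV}, and that fact contradicts the linchpin of your argument. You claim that $\nabla {\bm K}^\varepsilon = {\bm K}\ast\nabla h^\varepsilon$ is bounded, hence that ${\bm K}^\varepsilon$ is globally Lipschitz, and you then invoke the Cauchy--Lipschitz theorem. This is false under the stated hypotheses: the paper states that ${\bm K}^\varepsilon$ is only \emph{quasi-Lipschitz} continuous. The assumptions permit $h$ to have a logarithmic singularity at the origin (that is exactly what $\chi_{\log}^{-}h\in L^\infty$ allows), and $\chi_1\nabla h\in L^\infty$ only gives $|\nabla h({\bm x})|\le C|{\bm x}|^{-1}$. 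Near ${\bm x}={\bm 0}$ the relevant convolution $\int_{|{\bm y}|\le\delta}|{\bm x}-{\bm y}|^{-1}|{\bm y}|^{-1}\,d{\bm y}$ is the borderline Riesz case in $\mathbb{R}^2$ and diverges logarithmically as ${\bm x}\to{\bm 0}$. Concretely, for the flagship Euler-$\alpha$ example $h({\bm x})=K_0(|{\bm x}|)/(2\pi)$ one has ${\bm K}^\varepsilon({\bm x})\sim c\,{\bm x}^\perp\log|{\bm x}|$ near the origin, whose gradient is unbounded, so $\|\nabla{\bm K}^\varepsilon\|_{L^\infty}=\infty$ already in the case the whole theory is built around.

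The conclusion survives, but the argument must be run at the log-Lipschitz level: one proves $|{\bm K}^\varepsilon({\bm x})-{\bm K}^\varepsilon({\bm y})|\le C\,|{\bm x}-{\bm y}|\,(1+\bigl\vert\log|{\bm x}-{\bm y}|\bigr\vert)$ together with ${\bm K}^\varepsilon\in C_0(\mathbb{R}^2)$ and ${\bm K}^\varepsilon({\bm 0})=0$, transfers this modulus of continuity to ${\bm u}^\varepsilon={\bm K}^\varepsilon\ast q$ using only the total variation $\|q\|_{\mathcal{M}}$, and then obtains existence and uniqueness of the flow map from Osgood's lemma rather than Picard--Lindel\"of; this is the Yudovich/Marchioro--Pulvirenti route that \cite{G.3} follows. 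Your remaining steps --- conservation of $\|q\|_{\mathcal{M}}$ under the measure-preserving flow, Liouville's theorem giving ${\bm \eta}^\varepsilon(\cdot,t)\in\mathscr{G}$, the push-forward definition of $q$, and the continuity statements in time --- are sound once this substitution is made, and your bookkeeping of which hypothesis controls which piece of the splitting (singularity versus far field) is essentially the right one for the $L^\infty$ bound on ${\bm K}^\varepsilon$.
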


\begin{remark}
The assumptions of Theorem~\ref{well-posed} are satisfied with two well-known regularization of the Euler equations;  the Euler-$\alpha$ equations
for $h({\bm x})=K_0(\vert{\bm x}\vert)/(2\pi)$ and the vortex blob method for $h({\bm x})=1/(\pi(\vert{\bm x}\vert^2+1)^{-2})$. See \cite{G.3, Holm(c)}.
\end{remark}

\subsection{The Euler-Poincar\'{e} point vortex system}
\label{EPPV}
In what follows, we suppose that the smoothing function $h$ in (\ref{h_eps-h}) satisfies the assumptions of Theorem~\ref{well-posed}. In addition, suppose that $h$ is radial, 
namely $h_r(\vert{\bm x}\vert) = h({\bm x})$, and it satisfies
\begin{equation}
\int_{\mathbb{R}^2} h({\bm x}) d{\bm x} = 2 \pi \int_0^\infty r h_r (r) dr = 1. \label{h^eps-1}
\end{equation}
We first investigate the properties of ${\bm K}^\varepsilon$. As shown in \cite{G.3}, under the assumptions of Theorem~\ref{well-posed}, ${\bm K}^\varepsilon$ belongs to
 $C_0(\mathbb{R}^2)$ and it is quasi-Lipschitz continuous with ${\bm K}^\varepsilon({\bm 0}) = 0$. It is also important to remark that ${\bm K}^\varepsilon$ is defined by
 ${\bm K}^\varepsilon = \nabla^\perp G^\varepsilon$,
 where $G^\varepsilon$ is a solution to the following Poisson equation for $h^\varepsilon$:
\begin{equation}
- \Delta G^\varepsilon = h^\varepsilon.
\label{Poisson_G}
\end{equation}
If $h$ is radial, so is $G^\varepsilon$, say $G^\varepsilon({\bm x}) = G_r^\varepsilon(\vert{\bm x}\vert)$ and we have the relation,
\begin{equation}
G^\varepsilon({\bm x}) = G^1\left( \frac{{\bm x}}{\varepsilon} \right) - \frac{1}{2 \pi}\log{\varepsilon}. 
\label{G_1}
\end{equation}
Then, we have
\begin{equation}
{\bm K}^\varepsilon({\bm x}) = \frac{{\bm x}^\perp}{\varepsilon\vert{\bm x}\vert} \frac{\mbox{d}G_r^1}{\mbox{d}r} \left( \frac{\vert{\bm x}\vert}{\varepsilon} \right) \equiv {\bm K}({\bm x}) P_K \left( \frac{\vert{\bm x}\vert}{\varepsilon} \right),  \label{K^eps-KP}
\end{equation}
where $P_K(r)$ is defined by
\begin{equation}
P_K(r) = - 2 \pi r \frac{\mbox{d}G_r^1}{\mbox{d}r} (r).
\label{P_K}
\end{equation}

Suppose now that the initial vorticity field is represented by a set of $\delta$-distributions, 
\begin{equation}
q_0({\bm x}) = \sum_{n=1}^N \Gamma_n \delta({\bm x} - {\bm x}_n^0), \label{q0}
\end{equation}
where ${\bm x}_n^0 = (x_n^0, y_n^0) \in \mathbb{R}^2$ for $n=1,\dots,N$ are their point supports of the $\delta$-singularities, called 
\textit{$\varepsilon$-point vortices}. The strength $\Gamma_n \in \mathbb{R}$ corresponds to the circulation around the $\varepsilon$-point vortex at ${\bm x}_n^0$.
Theorem~\ref{well-posed} shows that there exists a unique global weak solution to (\ref{RPDE}) with the initial data (\ref{q0}). 
More precisely, we have the following proposition.

\begin{proposition}
Suppose that $h$ satisfies the assumptions of Theorem~\ref{well-posed}. Then, the solution to (\ref{RPDE}) with the initial data (\ref{q0}) is expressed by
\begin{equation}
q({\bm x}, t) = \sum_{n=1}^N \Gamma_n \delta( {\bm x} - {\bm \eta}^\varepsilon({\bm x}_n^0, t ) ).  \label{q_sol}
\end{equation}
Moreover, the point vortices in the Euler-Poincar\'{e} system never collapse. \label{pv-solution}
\end{proposition}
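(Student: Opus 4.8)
The plan is to read off both assertions from Theorem~\ref{well-posed}, whose hypotheses on $h$ are assumed throughout, applied to the measure-valued datum $q_0 \in \mathcal{M}(\mathbb{R}^2)$ in (\ref{q0}). The theorem supplies a unique global weak solution together with a Lagrangian flow map ${\bm \eta}^\varepsilon(\cdot,t) \in \mathscr{G}$ that, for each fixed $t$, is a homeomorphism of $\mathbb{R}^2$ preserving Lebesgue measure; the solution is recovered by transporting the datum along this flow, which for a Radon measure is the push-forward of $q_0$ by the forward map ${\bm \eta}^\varepsilon(\cdot,t)$.

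First I would establish the representation (\ref{q_sol}) by pairing the push-forward with an arbitrary test function $\varphi \in C_0(\mathbb{R}^2)$:
\begin{equation*}
\langle q(\cdot,t),\varphi\rangle = \langle q_0,\varphi\circ{\bm \eta}^\varepsilon(\cdot,t)\rangle = \sum_{n=1}^N \Gamma_n\,\varphi\big({\bm \eta}^\varepsilon({\bm x}_n^0,t)\big),
\end{equation*}
which is precisely the pairing of $\varphi$ with $\sum_{n}\Gamma_n\delta(\cdot-{\bm \eta}^\varepsilon({\bm x}_n^0,t))$; hence (\ref{q_sol}) follows. To confirm that the vortex centers ${\bm x}_n(t):={\bm \eta}^\varepsilon({\bm x}_n^0,t)$ obey a well-posed evolution, I would insert (\ref{q_sol}) into the relation ${\bm u}^\varepsilon={\bm K}^\varepsilon\ast q$ of (\ref{RPDE}), giving ${\bm u}^\varepsilon({\bm x},t)=\sum_n\Gamma_n{\bm K}^\varepsilon({\bm x}-{\bm x}_n(t))$, which is continuous since ${\bm K}^\varepsilon\in C_0(\mathbb{R}^2)$. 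Evaluating (\ref{EP-flow}) at ${\bm x}={\bm x}_m^0$ and using ${\bm K}^\varepsilon({\bm 0})=0$ to drop the self-interaction term, the centers satisfy the regularized point-vortex system $\dot{{\bm x}}_m=\sum_{n\neq m}\Gamma_n{\bm K}^\varepsilon({\bm x}_m-{\bm x}_n)$ with continuous, quasi-Lipschitz right-hand side.

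For the non-collapse claim I would invoke the injectivity of the flow. Since ${\bm \eta}^\varepsilon(\cdot,t)$ is a homeomorphism for each fixed $t$, it is in particular injective, so the distinctness of the initial supports, ${\bm x}_m^0\neq{\bm x}_n^0$ for $m\neq n$, forces ${\bm x}_m(t)={\bm \eta}^\varepsilon({\bm x}_m^0,t)\neq{\bm \eta}^\varepsilon({\bm x}_n^0,t)={\bm x}_n(t)$ for every $t\in\mathbb{R}$. Equivalently, a hypothetical collision ${\bm x}_m(t^*)={\bm x}_n(t^*)$ would, by backward uniqueness of the characteristic ODE (guaranteed by the quasi-Lipschitz, i.e. Osgood, modulus of ${\bm K}^\varepsilon$), propagate back to ${\bm x}_m^0={\bm x}_n^0$, a contradiction. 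Hence the $\varepsilon$-point vortices never collapse.

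The only genuinely delicate point I anticipate is the self-consistency underlying the first step: one must know that the velocity generating the flow map is exactly the field induced by the transported measure $q$, and that the self-induced velocity at each vortex is finite---indeed zero. Both facts are furnished by Theorem~\ref{well-posed} together with the structural properties ${\bm K}^\varepsilon\in C_0(\mathbb{R}^2)$ and ${\bm K}^\varepsilon({\bm 0})=0$ recorded above; without the latter normalization the point-vortex trajectories would not even be well defined. Verifying that these properties of ${\bm K}^\varepsilon$ are inherited from the assumptions on $h$ is where the substantive work lies, the remaining steps being formal consequences of transport and injectivity.
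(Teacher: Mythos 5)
Your proposal is correct and follows essentially the same route as the paper: both rest on Theorem~\ref{well-posed} to transport $q_0$ along the flow map (you phrase the identity $q(\cdot,t)=\sum_n\Gamma_n\delta(\cdot-{\bm \eta}^\varepsilon({\bm x}_n^0,t))$ via push-forward against test functions, the paper via the equivalent manipulation $\delta({\bm \eta}^\varepsilon({\bm x},-t)-{\bm x}_n^0)=\delta({\bm x}-{\bm \eta}^\varepsilon({\bm x}_n^0,t))$), and both deduce non-collapse from the injectivity/uniqueness of the flow map. The additional derivation of the point-vortex ODE is not needed for the proposition but matches what the paper does immediately afterwards.
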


\begin{proof}
Since Theorem~\ref{well-posed} assures the  existence of a unique global solution to (\ref{EP-flow}), we have
\begin{equation*}
q({\bm x}, t) = q_0({\bm \eta}^\varepsilon({\bm x}, -t)) = \sum_{n=1}^N \Gamma_n \delta( {\bm \eta}^\varepsilon({\bm x}, -t) - {\bm x}_n^0 ).
\end{equation*}
If ${\bm \eta}^\varepsilon({\bm x}, -t) = {\bm x}_n^0$ then ${\bm x} = {\bm \eta}^\varepsilon({\bm x}_n^0, t)$ else ${\bm x} \neq {\bm \eta}^\varepsilon({\bm x}_n^0, t)$, which implies
$\delta( {\bm \eta}^\varepsilon({\bm x}, -t) - {\bm x}_n^0 ) = \delta( {\bm x} - {\bm \eta}^\varepsilon({\bm x}_n^0, t) )$.
Moreover, it follows from the uniqueness of the flow map that ${\bm \eta}^\varepsilon({\bm x}_m^0, t) \neq {\bm \eta}^\varepsilon({\bm x}_n^0, t)$ for $m \neq n$ and an arbitrary $t \in \mathbb{R}$. Thus, there is no collapse. 
\end{proof}

The evolution of $\varepsilon$-point vortices is described by
 ${\bm x}_n^\varepsilon(t) = {\bm \eta}^\varepsilon({\bm x}_n^0, t)$. It follows from Proposition~\ref{pv-solution} and (\ref{K^eps-KP}) with $K^\varepsilon({\bm 0})=0$, the equations (\ref{EP-flow}) with the initial vorticity (\ref{q0}) are
 equivalent to
\begin{equation}
\frac{\mbox{d}}{\mbox{d}t} {\bm x}_n^\varepsilon(t) = {\bm u}^\varepsilon\left( {\bm x}_n^\varepsilon(t), t \right) = - \frac{1}{2\pi} \sum_{m\neq n}^N \Gamma_m \frac{( {\bm x}_n^\varepsilon - {\bm x}_m^\varepsilon )^\perp}{(l_{mn}^\varepsilon) ^2} P_K \left( \frac{l_{mn}^\varepsilon}{\varepsilon} \right), \quad  n = 1,\dots,N,   \label{xEP-PV}  
\end{equation}
where $l_{mn}^\varepsilon(t) = \vert{\bm x}_n^\varepsilon(t) - {\bm x}_m^\varepsilon(t)\vert$ and ${\bm x}_n^\varepsilon(0) = {\bm x}_n^0$.
 The evolution equation for $\varepsilon$-point vortices is called  the Euler-Poincar\'{e} point vortex (EP-PV) system.
According to Proposition~\ref{pv-solution}, a weak solution to the 2D Euler-Poincar\'{e} equations provides a solution of the EP-PV system and vice versa. 
Now, let us see some  properties of the EP-PV system. Considering the relation
\begin{equation*}
G_r^\varepsilon(\vert{\bm x}\vert) = - \frac{1}{2 \pi} \left[ \log{\vert{\bm x}\vert} + H_G\left( \frac{\vert{\bm x}\vert}{\varepsilon} \right) \right]
\end{equation*}
with 
\begin{equation}
H_G(r) = - \log{r} - 2 \pi G_r^1(r), 
\label{H_G}
\end{equation}
we find that (\ref{xEP-PV}) is formulated as a Hamiltonian dynamical system. That is to say, it is equivalent to
\begin{equation*}
\Gamma_n \frac{\mbox{d} x_n^\varepsilon}{\mbox{d}t}  = \frac{\partial \mathscr{H}^\varepsilon}{\partial y_n^\varepsilon}, \qquad \Gamma_n \frac{\mbox{d} y_n^\varepsilon}{\mbox{d}t}  = - \frac{\partial \mathscr{H}^\varepsilon}{\partial x_n^\varepsilon}, \qquad  n = 1,\dots,N,
\end{equation*}
with the Hamiltonian
\begin{equation}
\mathscr{H}^\varepsilon = - \frac{1}{2 \pi} \sum_{n=1}^N \sum_{m=n+1}^N \Gamma_n \Gamma_m \left[ \log{l_{mn}^\varepsilon} + H_G\left( \frac{l_{mn}^\varepsilon}{\varepsilon} \right) \right]. \label{EP-Hamiltonian}
\end{equation}
The EP-PV system (\ref{xEP-PV}) admits four conserved quantities $(\mathscr{H}^\varepsilon, Q^\varepsilon, P^\varepsilon, I^\varepsilon)$, where 
\begin{equation*}
Q^\varepsilon + i P^\varepsilon = \sum_{n = 1}^N x_n^\varepsilon + i y_n^\varepsilon(t), \qquad I^\varepsilon = \sum_{n = 1}^N \Gamma_n \left[ (x_n^\varepsilon)^2 + (y_n^\varepsilon)^2 \right].
\end{equation*}
We then have the following integrability of the EP-PV system.
\begin{proposition}
 The EP-PV system (\ref{xEP-PV}) for $N \leq 3$ is integrable for any strengths of point vortices. It is also integrable for $N = 4$ when the total vortex strength is zero, 
 i.e. $\Gamma = \sum_{n = 1}^N \Gamma_n = 0$.
\end{proposition}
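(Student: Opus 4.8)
The plan is to show that the dispersive regularization leaves intact the entire Hamiltonian and symmetry structure responsible for integrability, so that the statement reduces to the classical point-vortex argument of Aref--Eckhardt type. First I would record the Poisson structure behind the Hamiltonian formulation given above: with the bracket
\begin{equation*}
\{F, G\} = \sum_{n=1}^N \frac{1}{\Gamma_n}\left( \frac{\partial F}{\partial x_n^\varepsilon}\frac{\partial G}{\partial y_n^\varepsilon} - \frac{\partial F}{\partial y_n^\varepsilon}\frac{\partial G}{\partial x_n^\varepsilon}\right),
\end{equation*}
the EP-PV evolution (\ref{xEP-PV}) is $\dot F = \{F, \mathscr H^\varepsilon\}$. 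The crucial structural observation is that the Hamiltonian (\ref{EP-Hamiltonian}) depends on the configuration only through the mutual distances $l_{mn}^\varepsilon$; hence $\mathscr H^\varepsilon$ is invariant under the full Euclidean group $E(2)$ of rigid translations and rotations, exactly as the unregularized point-vortex Hamiltonian. The smoothing function $h$ enters only through the radial profile $H_G$ and not through this invariance, which is precisely what makes the integrability universal across the Euler--Poincar\'e models.

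Next I would verify the Poisson algebra of the conserved quantities. A direct computation gives
\begin{equation*}
\{Q^\varepsilon, P^\varepsilon\} = \Gamma, \qquad \{Q^\varepsilon, I^\varepsilon\} = 2 P^\varepsilon, \qquad \{P^\varepsilon, I^\varepsilon\} = - 2 Q^\varepsilon,
\end{equation*}
while $\mathscr H^\varepsilon$ Poisson-commutes with each of $Q^\varepsilon, P^\varepsilon, I^\varepsilon$ by the $E(2)$-invariance just noted. In particular $\{(Q^\varepsilon)^2 + (P^\varepsilon)^2, I^\varepsilon\} = 0$, so the three quantities $\mathscr H^\varepsilon$, $I^\varepsilon$ and $(Q^\varepsilon)^2 + (P^\varepsilon)^2$ are pairwise in involution for \emph{every} choice of the strengths $\Gamma_n$. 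For $N \le 3$ the phase space has dimension $2N \le 6$, and these three functionally independent integrals in involution (independence being checked at a generic configuration) suffice for Liouville--Arnold integrability; for $N = 1, 2$ fewer are needed and the motion is explicit.

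For $N = 4$ the phase space is eight-dimensional, so a fourth integral in involution is required; generically it is absent, which is the origin of chaotic four-vortex motion. The assumption $\Gamma = \sum_n \Gamma_n = 0$ is exactly what restores integrability, because it turns the relation $\{Q^\varepsilon, P^\varepsilon\} = \Gamma$ into $\{Q^\varepsilon, P^\varepsilon\} = 0$: the two momenta then Poisson-commute and generate a two-dimensional abelian group of translation symmetries, by which the system can be reduced to two degrees of freedom. Integrability then follows once one further integral in involution with the reduced Hamiltonian is produced on the reduced four-dimensional phase space. I expect this reduction step --- performing the symplectic reduction by the commuting momenta and exhibiting the additional independent integral in involution, a construction that genuinely requires $\Gamma = 0$ (note that $\{Q^\varepsilon, I^\varepsilon\} = 2P^\varepsilon$ and $\{P^\varepsilon, I^\varepsilon\} = -2Q^\varepsilon$ obstruct a naive use of the angular impulse) and parallels the classical integrability of the four-vortex problem with vanishing total circulation --- to be the main obstacle. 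Throughout, one must also confirm that $\mathscr H^\varepsilon, Q^\varepsilon, P^\varepsilon, I^\varepsilon$ are functionally independent on an open dense set, which is routine but necessary.
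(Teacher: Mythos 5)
Your proposal follows the same route as the paper: the paper's entire proof consists of writing down this Poisson bracket and recording the involution relations $\{\mathscr{H}^\varepsilon, I^\varepsilon\}=\{\mathscr{H}^\varepsilon,(Q^\varepsilon)^2+(P^\varepsilon)^2\}=\{(Q^\varepsilon)^2+(P^\varepsilon)^2, I^\varepsilon\}=0$ together with $\{Q^\varepsilon,P^\varepsilon\}=\Gamma$, $\{Q^\varepsilon,I^\varepsilon\}=2P^\varepsilon$, $\{P^\varepsilon,I^\varepsilon\}=-2Q^\varepsilon$, leaving the Liouville--Arnold counting and the $N=4$, $\Gamma=0$ reduction as implicit appeals to the classical point-vortex argument. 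The additional scaffolding you supply (the $E(2)$-invariance explanation, the functional-independence caveat, and the honest flagging of the $N=4$ reduction step) only makes explicit what the paper leaves unsaid, so there is no genuine divergence in method.
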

\begin{proof}
Defining the Poisson bracket between two functions $f$ and $g$ by
\begin{equation*}
\{ f, g \} = \sum_{n = 1}^N \frac{1}{\Gamma_n} \left( \frac{\partial f}{\partial x_n}\frac{\partial g}{\partial y_n} - \frac{\partial f}{\partial y_n}\frac{\partial g}{\partial x_n} \right),
\end{equation*}
we find $\{ \mathscr{H}^\varepsilon, I^\varepsilon \} = 0$, $\{ \mathscr{H}^\varepsilon, (P^\varepsilon)^2 + (Q^\varepsilon)^2 \} = 0$ and $\{ (P^\varepsilon)^2 + (Q^\varepsilon)^2, I^\varepsilon \} = 0$.  In addition to above invariant quantities, we have $\{ Q^\varepsilon, P^\varepsilon \} = \Gamma$, $\{ Q^\varepsilon, I^\varepsilon \} = 2 P^\varepsilon$ and $\{ P^\varepsilon, I^\varepsilon \} = - 2 Q^\varepsilon$. 
\end{proof}
In the case $N = 4$ with $\Gamma \neq 0$ or $N \geq 5$, the system is no longer integrable and the dynamics of $\varepsilon$-point vortices could be chaotic. 
Another important conserved quantity $M^\varepsilon$ is introduced  by
\begin{equation*}
M^\varepsilon = \sum_{n \neq m}^N \Gamma_n \Gamma_m (l_{mn}^\varepsilon)^2 = 2 (\Gamma I^\varepsilon - (Q^\varepsilon)^2 - (P^\varepsilon)^2),
\end{equation*}
which depends only on the distances $l_{mn}^\varepsilon$ between two $\varepsilon$-point vortices at ${\bm x}_m^\varepsilon$ and ${\bm x}_n^\varepsilon$.

\subsection{Variations of energy and enstrophy}
\label{Variations}
We are concerned with  the enstrophy and the energy varying with the evolution of $\varepsilon$-point vortices, which are
 derived based on the Novikov's method \cite{Novikov, Sakajo}. We define the Fourier transform of the function $f$ by
\begin{equation}
\mathscr{F}[f]({\bm k}) = \frac{1}{2 \pi}\int_{\mathbb{R}^2} f({\bm x}) e^{- i {\bm x} \cdot {\bm k}} d {\bm x}. \label{FT}
\end{equation}
Note that if $f$ is radial, i.e., $f_r(\vert{\bm x}\vert) = f({\bm x})$, then its Fourier transform is equivalent to the Hankel transform of $f_r$, 
\begin{equation}
\widehat{f}(s) = \mathscr{F}[f]({\bm k}) = \int_0^\infty r f_r(r) J_0(rs) dr, \label{HT}
\end{equation}
in which $r = \vert{\bm x}\vert$ and $s = \vert{\bm k}\vert$.  First, the total enstrophy for the regular vorticity is given by
\begin{equation*}
\frac{1}{2}\int_{\mathbb{R}^2} \left\vert \omega^\varepsilon({\bm x}, t) \right\vert^2 d{\bm x} = \frac{1}{2}\int_{\mathbb{R}^2} \left\vert \mathscr{F}[\omega^\varepsilon]({\bm k}, t) \right\vert^2 d{\bm k} = \int_0^\infty  \pi s \langle \left\vert \mathscr{F}[\omega^\varepsilon](s,t) \right\vert^2 \rangle ds,
\end{equation*} 
where $\langle f \rangle = \frac{1}{2\pi}\int_{-\pi}^\pi f(\theta) d\theta$. Here, we define the enstrophy density spectrum $\mathscr{Z}_N^\varepsilon$ by 
\begin{equation*}
\mathscr{Z}_N^\varepsilon (s,t) =  \pi s \langle \left\vert \mathscr{F}[\omega^\varepsilon](s,t) \right\vert^2 \rangle. 
\end{equation*}
The Fourier transform of the vorticity field (\ref{q_sol}) is represented by
\begin{equation*}
\mathscr{F}[q]({\bm k}, t) = \frac{1}{2 \pi} \sum_{n=1}^N \Gamma_n e^{- i {\bm k}\cdot {\bm x}_n^\varepsilon(t)}.
\end{equation*}
Hence, we have
\begin{equation*}
\left\vert \mathscr{F}[q]({\bm k}, t) \right\vert^2 = \frac{1}{4 \pi^2} \left[ \sum_{n=1}^N \Gamma_n^2 + 2 \sum_{n=1}^N \sum_{m=n+1}^N \Gamma_n \Gamma_m \cos \left( {\bm k} \cdot ({\bm x}_n^\varepsilon(t) - {\bm x}_m^\varepsilon(t)) \right)  \right].
\end{equation*}
Since $h^\varepsilon$ is radial and $\mathscr{F}[\omega^\varepsilon] = 2 \pi \mathscr{F}[h^\varepsilon] \mathscr{F}[q]$ owing to $\omega^\varepsilon = h^\varepsilon \ast q$, we obtain
\begin{align*}
\mathscr{Z}_N^\varepsilon (s,t) &= \frac{s}{4 \pi} \left\vert 2 \pi \widehat{h^\varepsilon}(s) \right\vert^2 \left[ \sum_{n=1}^N \Gamma_n^2 + 2 \sum_{n=1}^N \sum_{m=n+1}^N \Gamma_n \Gamma_m \frac{1}{2\pi}\int_{-\pi}^\pi \cos \left( s l_{mn}^\varepsilon \cos \theta \right) d\theta \right]  \\
& = \frac{s}{4 \pi} \left\vert 2 \pi \widehat{h^\varepsilon}(s) \right\vert^2 \left[ \sum_{n=1}^N \Gamma_n^2 + 2 \sum_{n=1}^N \sum_{m=n+1}^N \Gamma_n \Gamma_m J_0 \left( s l_{mn}^\varepsilon \right) \right],
\end{align*}
where $J_0(s)$ is a Bessel function of the first kind. Accordingly, the total enstrophy for the EP-PV system is expressed by
\begin{align*}
&\frac{1}{2}\int_{\mathbb{R}^2} \left\vert \omega^\varepsilon({\bm x}, t) \right\vert^2 d{\bm x} = \int_0^\infty  \mathscr{Z}_N^\varepsilon (s,t)  ds  \\
&= \frac{1}{4 \pi \varepsilon^2} \sum_{n=1}^N \Gamma_n^2 \int_0^\infty s \left\vert 2\pi\widehat{h}(s) \right\vert^2 ds  +  \frac{1}{2\pi \varepsilon^2} \sum_{n=1}^N \sum_{m=n+1}^N \Gamma_n \Gamma_m \int_0^\infty s \left\vert 2\pi\widehat{h}(s) \right\vert^2  J_0 \left( s \frac{l_{mn}^\varepsilon}{\varepsilon} \right) ds.
\end{align*} 
Here, we use the relation $\widehat{h^\varepsilon}(s) = \widehat{h}(\varepsilon s)$. Since the first term in the right-hand side is constant in time,  the variational part of the enstrophy  is provided 
by the second term, namely, 
\begin{equation}
\mathscr{Z}^\varepsilon(t) \equiv \frac{1}{2\pi \varepsilon^2} \sum_{n=1}^N \sum_{m=n+1}^N \Gamma_n \Gamma_m \int_0^\infty s \left\vert 2\pi\widehat{h}(s) \right\vert^2  J_0 \left( s \frac{l_{mn}^\varepsilon(t)}{\varepsilon} \right) ds. \label{v-enstrophy}
\end{equation}
Second,  the total energy for the regular velocity is defined by
\begin{equation*}
\frac{1}{2}\int_{\mathbb{R}^2} \left\vert {\bm u}^\varepsilon({\bm x}, t) \right\vert^2 d{\bm x} = \int_0^\infty  \pi s \langle \left\vert \mathscr{F}[{\bm u}^\varepsilon](s,t) \right\vert^2 \rangle ds.
\end{equation*}
Since $\left\vert \mathscr{F}[\omega^\varepsilon]({\bm k}, t) \right\vert^2 = \vert{\bm k}\vert^2 \left\vert \mathscr{F}[{\bm u}^\varepsilon]({\bm k}, t) \right\vert^2$,  the energy density spectrum $E_N^\varepsilon$ is represented by
\begin{align*}
E_N^\varepsilon(s, t) &\equiv \pi s \langle \left\vert \mathscr{F}[{\bm u}^\varepsilon](s,t) \right\vert^2 \rangle = \frac{\pi}{s} \langle \left\vert \mathscr{F}[\omega^\varepsilon](s,t) \right\vert^2 \rangle  \\
& = \frac{1}{4 \pi s} \left\vert 2\pi\widehat{h^\varepsilon}(s) \right\vert^2 \left[ \sum_{n=1}^N \Gamma_n^2 + 2 \sum_{n=1}^N \sum_{m=n+1}^N \Gamma_n \Gamma_m J_0 \left( s l_{mn}^\varepsilon \right) \right]. 
\end{align*}
Hence, the total energy that is cut off at a scale larger than $1 \ll L < \infty$  is expressed by
\begin{align}
\int_{L^{-1}}^\infty E_N^\varepsilon(s, t) dr &= \frac{1}{4 \pi} \sum_{n=1}^N \Gamma_n^2\int_{\varepsilon L^{-1}}^\infty \frac{1}{s} \left\vert 2\pi\widehat{h}(s) \right\vert^2 ds   \nonumber \\
& \quad +  \frac{1}{2\pi} \sum_{n=1}^N \sum_{m=n+1}^N \Gamma_n \Gamma_m \int_{\varepsilon L^{-1}}^\infty \frac{1}{s} \left\vert 2\pi\widehat{h}(s) \right\vert^2  J_0 \left( s \frac{l_{mn}^\varepsilon}{\varepsilon} \right) ds. \label{TotalE}
\end{align} 
The second term is rewritten as follows.
\[
\int_{\varepsilon L^{-1}}^\infty \frac{1}{s} \left\vert 2\pi\widehat{h}(s) \right\vert^2  J_0 \left( s \frac{l_{mn}^\varepsilon}{\varepsilon} \right) ds = \int_{\varepsilon L^{-1}}^\infty \frac{1}{s} J_0 \left( s \frac{l_{mn}^\varepsilon}{\varepsilon} \right) ds + \int_{\varepsilon L^{-1}}^\infty \frac{1}{s} \left( \left\vert 2\pi\widehat{h}(s) \right\vert^2 - 1 \right) J_0 \left( s \frac{l_{mn}^\varepsilon}{\varepsilon} \right) ds.
\]
Then, the following approximation holds for a sufficiently large $L$.
\begin{align*}
\int_{\varepsilon L^{-1}}^\infty \frac{1}{s} J_0 \left( s \frac{l_{mn}^\varepsilon}{\varepsilon} \right) ds &\sim \int_0^\infty \frac{ s }{s^2 + (\varepsilon L^{-1})^2} J_0 \left( s \frac{l_{mn}^\varepsilon}{\varepsilon}\right)  ds  = K_0\left( \frac{l_{mn}^\varepsilon}{L} \right) \\
& \sim - \log l_{mn}^\varepsilon  + \log{\frac{L e^\beta}{2}} + \mathcal{O}\left( L^{-2} \log{L^{-1}} \right), 
\end{align*}
in which $\beta$ is the Euler's constant. Taking the $L \rightarrow \infty$ limit in the non-constant part of the total energy (\ref{TotalE}), we obtain the variational part of the energy as follows.
\begin{equation}
E^\varepsilon(t) \equiv - \frac{1}{2\pi} \sum_{n=1}^N \sum_{m=n+1}^N \Gamma_n \Gamma_m \left[ \log{l_{mn}^\varepsilon(t)} +  \int_0^\infty \frac{1}{s} \left( 1 - \left\vert 2\pi\widehat{h}(s) \right\vert^2 \right) J_0 \left( s \frac{l_{mn}^\varepsilon(t)}{\varepsilon} \right) ds \right]. \label{v-energy}
\end{equation}
Note that the integrand of the second term has no singularity, since it follows from (\ref{h^eps-1}) that $2\pi \widehat{h}(0) =  1$.


\section{Main results}
\subsection{The three $\varepsilon$-vortex problem}
\label{three-vortex}

We consider the three $\varepsilon$-point vortex problem, i.e. the EP-PV system with $N=3$, whose Hamiltonian is expressed explicitly by
\begin{align*}
\mathscr{H}^\varepsilon = & -\frac{1}{2\pi} \left( \Gamma_2 \Gamma_3 \log l_{23}^\varepsilon + \Gamma_3 \Gamma_1 \log l_{31}^\varepsilon + \Gamma_1 \Gamma_2 \log l_{12}^\varepsilon \right) \\ 
& - \frac{1}{2\pi} \left[ \Gamma_2 \Gamma_3 H_G\left( \frac{l_{23}^\varepsilon}{\varepsilon} \right) + \Gamma_3 \Gamma_1 H_G\left( \frac{l_{31}^\varepsilon}{\varepsilon} \right) + \Gamma_1 \Gamma_2 H_G\left( \frac{l_{12}^\varepsilon}{\varepsilon} \right) \right].  
\end{align*}
It is reduced to the following equation for the distance $l_{mn}^\varepsilon$:
\begin{equation}
\frac{\mbox{d}}{\mbox{d}t} (l_{mn}^\varepsilon)^2 = \frac{2}{\pi} \Gamma_k A^\varepsilon \left[ \frac{1}{(l_{nk}^\varepsilon)^2} P_K \left( \frac{l_{nk}^\varepsilon}{\varepsilon} \right) - \frac{1}{(l_{km}^\varepsilon)^2} P_K \left( \frac{l_{km}^\varepsilon}{\varepsilon} \right) \right], \quad l_{mn}^\varepsilon(0) = \vert{\bm x}_m^0 - {\bm x}_n^0\vert, \label{l-EPPV}
\end{equation}
where $k, m, n \in \{ 1,2,3 \}$ with $k \neq m \neq n$ and $A^\varepsilon$ denotes the signed area of the triangle formed by the three $\varepsilon$-point vortices. Its sign is positive if ${\bm x}_1^\varepsilon$, ${\bm x}_2^\varepsilon$ and ${\bm x}_3^\varepsilon$ at the vertices of the triangle appear counterclockwise, while it is negative if they do clockwise. Remember that we have the two invariants in terms of the lengths;  $\mathscr{H}^\varepsilon$ and 
\begin{equation*}
M^\varepsilon = \Gamma_2 \Gamma_3 (l_{23}^\varepsilon)^2 + \Gamma_3 \Gamma_1 (l_{31}^\varepsilon)^2 + \Gamma_1 \Gamma_2 (l_{12}^\varepsilon)^2. 
\end{equation*}
In order to take the $\varepsilon \rightarrow 0$ limit, we introduce the following scaled variables:
\begin{equation}
{\bm X}_n(t) = \frac{1}{\varepsilon} {\bm x}_n^\varepsilon(\varepsilon^2 t + t^\ast ), \qquad L_{mn}(t) =  \frac{1}{\varepsilon} l_{mn}^\varepsilon (\varepsilon^2 t + t^\ast)  \label{scaled}
\end{equation}
for $m, n = \{ 1, 2, 3 \}$ with $m \neq n$, where $t^\ast\in \mathbb{R}$ is an arbitrary constant determined later. Then, the evolution equation for ${\bm X}_n(t)$ is described by 
\begin{equation}
\frac{\mbox{d}}{\mbox{d}t} {\bm X}_n = -\frac{1}{2\pi} \sum_{m\neq n}^3 \Gamma_m \frac{( {\bm X}_n - {\bm X}_m )^\perp}{L_{mn}^2} P_K\left( L_{mn} \right), \qquad {\bm X}_n(0) = \frac{{\bm x}_n^\varepsilon(t^\ast)}{\varepsilon}.  \label{XEP-PV}
\end{equation}
It  is also a Hamiltonian system with the Hamiltonian 
\begin{equation}
\mathscr{H} = \mathscr{H}^1 = - \frac{1}{2\pi} \left[ \Gamma_2 \Gamma_3 H_P\left( L_{23}^2 \right) + \Gamma_3 \Gamma_1 H_P\left( L_{31}^2 \right) + \Gamma_1 \Gamma_2 H_P\left( L_{12}^2 \right) \right],  \label{H}
\end{equation}
where 
\begin{equation}
H_P(r) = \log{\sqrt{r}} + H_G \left( \sqrt{r} \right),
\label{H_P}
\end{equation}
and it has an invariant quantity, 
\begin{equation}
M = M^1 = \Gamma_2 \Gamma_3 L_{23}^2 + \Gamma_3 \Gamma_1 L_{31}^2 + \Gamma_1 \Gamma_2 L_{12}^2.  \label{M}
\end{equation}
The evolution of the distance $L_{mn}$ is governed by
\begin{equation}
\frac{\mbox{d}}{\mbox{d}t} L_{mn}^2 = \frac{2}{\pi} \Gamma_k A \left[ \frac{1}{L_{nk}^2} P_K \left( L_{nk} \right) - \frac{1}{L_{km}^2} P_K \left( L_{km} \right) \right], \label{L_123} \\
\end{equation}
where $A = A^1$ is the signed area of the triangle formed by the three $\varepsilon$-point vortices at ${\bm X}_1(t)$, ${\bm X}_2(t)$ and ${\bm X}_3(t)$. We easily observe that the relative equilibria of (\ref{L_123}) are equilateral triangles or collinear configurations.  See Proposition~1 of \cite{G.2} for its proof.

We remark that the solutions of (\ref{xEP-PV}) and (\ref{l-EPPV}) are recovered from those of the scaled systems (\ref{XEP-PV}) and (\ref{L_123}) via
\begin{equation}
{\bm x}_n^\varepsilon(t) = \varepsilon {\bm X}_n \left(\frac{t - t^\ast}{\varepsilon^2} \right), \qquad l_{mn}^\varepsilon (t) = \varepsilon L_{mn} \left( \frac{t - t^\ast}{\varepsilon^2} \right).  \label{recover}
\end{equation}
In terms of the scaled variables, the variation of enstrophy $\mathscr{Z}^\varepsilon(t)$ is described by 
\begin{equation}
\mathscr{Z}^\varepsilon(t) = - \frac{1}{\varepsilon^2} \mathscr{Z}_0\left( \frac{t - t^\ast}{\varepsilon^2} \right), \qquad \mathscr{Z}_0(\tau) = - \frac{1}{2\pi} \sum_{n=1}^N \sum_{m=n+1}^N \Gamma_n \Gamma_m Z_{mn}(\tau),
\label{Z_0}
\end{equation}
in which 
\begin{equation}
Z_{mn}(\tau) = \int_0^\infty s \left\vert 2\pi\widehat{h}(s) \right\vert^2  J_0 \left( s L_{mn}(\tau) \right) ds. \label{Z_nm}
\end{equation}
Regarding the energy variation $E^\varepsilon(t)$, since $\mathscr{H}^\varepsilon$ is constant, we rewrite (\ref{v-energy}) as follows.
\begin{equation*}
E^\varepsilon(t) = \mathscr{H}^\varepsilon +  E_0\left( \frac{t - t^\ast}{\varepsilon^2} \right), \qquad E_0(\tau) = - \frac{1}{2\pi} \sum_{n=1}^N \sum_{m=n+1}^N \Gamma_n \Gamma_m E_{mn}(\tau),
\end{equation*}
in which
\begin{equation*}
E_{mn}(\tau) = - H_G\left( L_{mn}(\tau) \right) +  \int_0^\infty \frac{1}{s} \left( 1 - \left\vert 2\pi\widehat{h}(s) \right\vert^2 \right) J_0 \left( s L_{mn}(\tau) \right) ds.
\end{equation*}
The energy dissipation rate $\mathscr{D}_E^\varepsilon$ is obtained by differentiating $E^\varepsilon(t)$:
\begin{equation*}
\mathscr{D}_E^\varepsilon (t) = \frac{1}{\varepsilon^2} \frac{\mbox{d}E_0}{\mbox{d}\tau} \left( \frac{t - t^\ast}{\varepsilon^2} \right).  
\end{equation*}

\begin{remark}
In summary, from a given radial smoothing function $h({\bm x})=h_r(\vert {\bm x} \vert )$, the functions $P_K( \vert {\bm x} \vert)$, $H_G(\vert{\bm x}\vert)$, $H_P(\vert {\bm x} \vert)$
and $L_P(\vert {\bm x} \vert)$ are derived as follows. For $h^\varepsilon$ defined by (\ref{h_eps-h}),  the function $G_r^\varepsilon( \vert {\bm x} \vert)$ is
obtained as the solution of the Poisson equation (\ref{Poisson_G}). Setting $\varepsilon=1$, we have $G_r^1(\vert{\bm x}\vert)$, yielding the functions $P_K$ and $H_G$
as (\ref{P_K}) and (\ref{H_G}), respectively. According to (\ref{H_P}), the function $H_P$ is derived from $H_G$. The function $L_P(\vert{\bm x}\vert)$ is defined by
\begin{equation}
L_P(\vert{\bm x}\vert) = \frac{1}{\vert {\bm x}\vert} P_K\left(\sqrt{\vert {\bm x} \vert}\right).
\label{L_P}
\end{equation}
Note that $L_P$ also satisfies
\begin{equation}
L_P(r) = 2 \frac{\mbox{d}}{\mbox{d}r}H_P(\sqrt{r}).
\label{dL_p}
\end{equation}
Those functions play a significant role in the investigation of the dynamics of the three $\varepsilon$-point vortices  shown later. 

The EP-PV system is a generalization of the $\alpha$PV system derived from the Euler-$\alpha$ equations considered in \cite{G.2}. 
 We remark that  for the modified Bessel function $K_0(\vert{\bm x}\vert)$ as the smoothing function in  the $\alpha$PV system, the functions $P_K(\vert{\bm x}\vert)$, $H_G(\vert{\bm x}\vert)$, $H_P(\vert{\bm x}\vert)$ and $L_p(\vert{\bm x}\vert)$ are correspondingly  denoted by  $B_K(\vert {\bm x} \vert)$, 
 $K_0(\vert{\bm x}\vert)$, $h_0(\vert{\bm x}\vert)$ and $h_K(\vert{\bm x}\vert)$ in the paper \cite{G.2}. Table~\ref{Tab_Function} is a summary of the correspondence between those functions and their common properties, whose proofs are  provided in Appendix~\ref{functions}.
 \end{remark}
 \begin{table}
 \centering
 \begin{tabular}{|c|c|c|} \hline
 $h_r(r)$ &  $K_0(r)$ & properties \\ \hline
 $P_K(r)$ & $B_K(r)$ & monotone increasing, upward convex, $0 \leq P_K < 1$ \\
 $H_G(r)$ & $K_0(r)$ & monotone decreasing, downward convex, (\ref{H_G-zero}) \\ 
  $H_P(r)$ & $h_0(r)$ & monotone increasing, upward convex  \\
  $L_P(r)$ & $h_K(r)$ & positive, monotone decreasing \\ \hline
 \end{tabular}
 \caption{Functions associated with the smoothing function $h_r(r)$ for the EP-PV system corresponding to $K_0(r)$ for the $\alpha$PV system. Their common properties used in 
 the analysis of the three vortex problem are also provided.}
 \label{Tab_Function}
 \end{table}

\subsection{Main theorems}
\label{Main}
As  in \cite{G.2}, the existence of the enstrophy dissipating solution of the three $\alpha$PV system is shown under the condition
\begin{equation}
\frac{1}{\Gamma_1} + \frac{1}{\Gamma_2} + \frac{1}{\Gamma_3} = 0.  \label{condi1} 
\end{equation}
In view of (\ref{condi1}), we may assume $\Gamma_1 \geq \Gamma_2 > 0 > \Gamma_3$ without loss of generality. Note that that (\ref{condi1}) yields $\mathscr{H}^\varepsilon = \mathscr{H}$. We then show the existence of the evolution of the three $\varepsilon$-point vortices whose enstrophy varies and energy is conserved in the sense of distributions in the $\varepsilon \rightarrow 0$ limit. To state the theorem, we introduce the following functions that are defined only from the strengths 
$\Gamma_1$ and $\Gamma_2$.
\begin{equation}
\psi(r) = \left( \frac{1}{1+r} \right)^{1/\Gamma_1}  \left( \frac{r}{1+r} \right)^{1/\Gamma_2}, \quad k_\pm = \left( \frac{\Gamma_1 + \Gamma_2 \pm \sqrt{\Gamma_1^2 + \Gamma_1 \Gamma_2 + \Gamma_2^2}}{\Gamma_2} \right)^2 \label{k_pm}
\end{equation}
and $k_0$ is either $k_-$ or $k_+$ such that  
\begin{equation*}
 k_0 = \argmin_{k \in \{ k_+, k_-\}} \psi\left( \frac{\Gamma_1}{\Gamma_2} k\right). 
\end{equation*}
Then we have the following main theorem.

\begin{theorem}
Let $h \in C^1(\mathbb{R}^2)$ be a positive radial function satisfying (\ref{h-asympt}), (\ref{h^eps-1}), $\chi_{3+\eta} h \in L^\infty(\mathbb{R}^2)$ with $\eta > 0$, $\chi_1 \nabla h \in L^1(\mathbb{R}^2)$ and $h_r^\prime < 0$. Suppose (\ref{condi1}) and the constant $\mathscr{H}_c$ satisfies
\begin{equation}
\frac{\Gamma_1^2 \Gamma_2^2}{4\pi(\Gamma_1 + \Gamma_2)} \log\left( \psi\left( \frac{\Gamma_1}{\Gamma_2}k_0 \right) \left[ \psi\left( \frac{\Gamma_1}{\Gamma_2} \right) \right]^{-1} \right) < \mathscr{H}_c < 0.  \label{H-Range}
\end{equation}
We also assume that, for any initial configuration with $\mathscr{H}^\varepsilon = \mathscr{H}_c$, the corresponding solution of (\ref{L_123})  does not converge to a relative equilibrium as either of $t \rightarrow \pm \infty$. Then, there exists a constant $t^\ast$ such that $l^\varepsilon_{mn}(t^\ast) \rightarrow 0$ as  $\varepsilon \rightarrow 0$ and
\begin{equation*}
\lim_{\varepsilon \rightarrow 0} \mathscr{Z}^\varepsilon = - z_0 \delta(\cdot - t^\ast) , \qquad \lim_{\varepsilon \rightarrow 0} \mathscr{D}_E^\varepsilon = 0
\end{equation*}
in the sense of distributions, where  
\begin{equation}
z_0 = \int_{-\infty}^\infty \mathscr{Z}_0(\tau) d\tau. 
\label{z_0}
\end{equation} 
\label{thm-dissipation}
\end{theorem}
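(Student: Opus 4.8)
The plan is to transfer the entire question to the autonomous scaled system~(\ref{L_123}) and to exploit that, once the two length-invariants are pinned down, the scaled orbit no longer depends on $\varepsilon$. In addition to the hypothesis~(\ref{condi1}) I would impose the vanishing of the second invariant, $M^\varepsilon=0$; this is the natural self-similarity condition, since $M=\sum_{n<m}\Gamma_n\Gamma_m L_{mn}^2$ is the only obstruction to a uniform rescaling of the triangle. Because~(\ref{condi1}) gives $\sum_{n<m}\Gamma_n\Gamma_m=0$ and hence $\mathscr{H}^\varepsilon=\mathscr{H}$, the normalisation $\mathscr{H}^\varepsilon=\mathscr{H}_c$ together with $M^\varepsilon=0$ forces $\mathscr{H}=\mathscr{H}_c$ and $M=M^\varepsilon/\varepsilon^2=0$ for the scaled variables~(\ref{scaled}), both independent of $\varepsilon$. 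Consequently the scaled distances $L_{mn}(\tau)$ solving~(\ref{L_123}) on the level set $\{\mathscr{H}=\mathscr{H}_c,\ M=0\}$, and with them $\mathscr{Z}_0$ and $E_0$, are $\varepsilon$-independent, and the original orbit for each $\varepsilon$ is recovered by the rescaling~(\ref{recover}). The three assertions then become statements about a single scaled orbit and the two integrals it generates.

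The first substantive step is to produce the scaled \emph{scattering} orbit. The two relations $M=0$ and $\mathscr{H}=\mathscr{H}_c$ cut the space of the three squared distances down to a one-dimensional invariant curve, on which the flow of~(\ref{L_123}) is monotone between relative equilibria. I would use the qualitative monotonicity and convexity of $H_P$, $P_K$ and $L_P$ recorded in Table~\ref{Tab_Function}, together with the parametrisation of the asymptotic ($L_{mn}\to\infty$) Hamiltonian by $\psi$ and $k_\pm$ from~(\ref{k_pm}), to show that for $\mathscr{H}_c$ in the open interval~(\ref{H-Range})---which is precisely the window lying strictly between the relative-equilibrium level and the value $0$---this curve has both ends escaping to infinity in the overall scale while passing through a single minimum. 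Since the scaled system is the EP-PV system at $\varepsilon=1$, Proposition~\ref{pv-solution} forbids the orbit from reaching $L_{mn}=0$, and the standing assumption that the solution does not converge to a relative equilibrium as $t\to\pm\infty$ then forces $L_{mn}(\tau)\to\infty$ as $\tau\to\pm\infty$. Balancing~(\ref{L_123}) with $P_K\to 1$ shows that $\frac{\mathrm d}{\mathrm d\tau}L_{mn}^2$ tends to a constant, so $L_{mn}(\tau)\sim c\,|\tau|^{1/2}$; choosing $t^\ast$ at the minimum of the scale gives $l_{mn}^\varepsilon(t^\ast)=\varepsilon L_{mn}(0)\to 0$, while for $t\neq t^\ast$ the rescaling yields $l_{mn}^\varepsilon(t)\to c\,|t-t^\ast|^{1/2}$, i.e. convergence to the classical self-similar collapsing orbit.

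Next I would pass to the distributional limits. For the enstrophy, the substitution $\tau=(t-t^\ast)/\varepsilon^2$ turns the pairing into $\langle\mathscr{Z}^\varepsilon,\phi\rangle=-\int_{\mathbb{R}}\mathscr{Z}_0(\tau)\,\phi(t^\ast+\varepsilon^2\tau)\,\mathrm d\tau$, so that $\mathscr{Z}^\varepsilon\to-z_0\,\delta(\cdot-t^\ast)$ with $z_0$ as in~(\ref{z_0}) reduces, by dominated convergence, to the single point $\mathscr{Z}_0\in L^1(\mathbb{R})$. Here the key is the identity $Z_{mn}(\tau)=2\pi\,(h\ast h)_r\!\big(L_{mn}(\tau)\big)$, which follows from $|2\pi\widehat h|^2=2\pi\,\widehat{h\ast h}$ and Hankel inversion: the hypothesis $\chi_{3+\eta}h\in L^\infty(\mathbb{R}^2)$ gives $h({\bm x})\lesssim|{\bm x}|^{-(3+\eta)}$, hence $(h\ast h)({\bm x})\lesssim|{\bm x}|^{-(3+\eta)}$, so $Z_{mn}(\tau)\lesssim L_{mn}^{-(3+\eta)}\sim|\tau|^{-(3+\eta)/2}$, which is integrable because $\eta>0$. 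The energy statement is parallel: as $L_{mn}\to\infty$ one has $H_G(L_{mn})\to 0$ and $\int_0^\infty s^{-1}\big(1-|2\pi\widehat h|^2\big)J_0(sL_{mn})\,\mathrm d s\to 0$ (the integrand being regular at $s=0$ because $2\pi\widehat h(0)=1$), whence $E_{mn}(\tau)\to 0$ and $E_0(\pm\infty)=0$; controlling $\mathrm dE_0/\mathrm d\tau$ by the same decay estimates gives $\mathrm dE_0/\mathrm d\tau\in L^1(\mathbb{R})$, and the same substitution yields $\langle\mathscr{D}_E^\varepsilon,\phi\rangle\to\phi(t^\ast)\big(E_0(+\infty)-E_0(-\infty)\big)=0$.

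The hard part will be the two places where the generality of $h$ replaces the explicit modified-Bessel computations of~\cite{G.2}. The first is the geometric analysis of the curve $\{\mathscr{H}=\mathscr{H}_c,\ M=0\}$ using only the qualitative properties of Table~\ref{Tab_Function}, which is what pins down the admissible range~(\ref{H-Range}) for which a genuine scattering orbit---rather than a bounded or periodic one---exists; establishing those properties from the hypotheses on $h$ (the content of Appendix~\ref{functions}, where positivity of $h$, $h_r'<0$ and $\chi_1\nabla h\in L^1(\mathbb{R}^2)$ enter) underpins this step. The second, and the genuinely new technical core, is extracting the sharp large-$L$ decay of the Hankel integrals $Z_{mn}$, $E_{mn}$ and their $\tau$-derivatives from the spatial decay of $h$; this is exactly what makes $\mathscr{Z}_0$ and $\mathrm dE_0/\mathrm d\tau$ integrable and legitimises the passage to the distributional limits, and it is where the decay hypothesis $\chi_{3+\eta}h\in L^\infty(\mathbb{R}^2)$ is indispensable.
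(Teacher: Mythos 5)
Your overall architecture coincides with the paper's: rescale via (\ref{scaled}), reduce the distributional limits to the integrability of $\mathscr{Z}_0$ and $\mbox{d}E_0/\mbox{d}\tau$ and the vanishing of $E_0$ at infinity, use the level-curve analysis of the scaled Hamiltonian (deferred to \cite{G.2}) to get $L_{mn}(\tau)\sim\mathcal{O}(|\tau|^{1/2})$, and convert $Z_{mn}$ into the autocorrelation $h\ast h$ to control its decay. Your route to the decay of $Z_{mn}$ (pointwise bound $h\lesssim|{\bm x}|^{-(3+\eta)}$ propagated through the convolution) differs from the paper's (monotonicity of $h_r$ giving $h\ast h\le 2\Vert h\Vert_{L^1}h_r(|{\bm x}|/2)$, then $\chi_1 h\in L^1$); both are legitimate under the stated hypotheses.

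There are, however, two genuine problems. First, you impose the extra hypothesis $M^\varepsilon=0$. This is not assumed in Theorem~\ref{thm-dissipation}, and Corollary~\ref{Dissipation} explicitly treats the cases $M\ge 0$ and $M<0$ separately, so the theorem is meant to cover arbitrary $M$. Your stated motivation is also spurious: the scaled orbit of (\ref{L_123}) is $\varepsilon$-independent as soon as the scaled initial configuration ${\bm X}_n(0)$ is fixed, whatever the value of $M=M^\varepsilon/\varepsilon^2$; no vanishing of $M$ is required for the reduction to a single scaled orbit, and restricting to the codimension-one set $M=0$ proves only a special case of the statement. Second, the energy half of the argument is asserted rather than proved. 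Knowing that $H_G(L_{mn})\to 0$ and $E_J(L_{mn})\to 0$ gives $E_0(\pm\infty)=0$, but it does not give $\mbox{d}E_0/\mbox{d}\tau\in L^1(\mathbb{R})$; for that one needs quantitative rates, and producing them is where essentially all of the paper's labor lies: the three-domain splitting yielding $|H_G(L)|\le cL^{-1}$ and $|\mbox{d}H_G/\mbox{d}r|\le cL^{-2}$, and the Hankel-integral estimates yielding $|E_J(L)|\le cL^{-(1-\alpha)}$ and $|\mbox{d}E_J/\mbox{d}r|\le cL^{-(2-\alpha)}$ for $1/2<\alpha<1$ via the bound on $\chi_\alpha\,\mbox{d}\widehat{h}/\mbox{d}s$ and the integrability of $u^{-\alpha}|J_0(u)|$. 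You correctly identify this as the technical core, but a proof must actually supply it; "controlling $\mbox{d}E_0/\mbox{d}\tau$ by the same decay estimates" is not an argument, since the oscillatory integrals $E_J$ and $\mbox{d}E_J/\mbox{d}r$ do not reduce to a convolution of $h$ with itself the way $Z_{mn}$ does.
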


Theorem~\ref{thm-dissipation} asserts that the enstrophy variation converges to the $\delta$-measure with the mass of $- z_0$ as $\varepsilon \rightarrow 0$. In other words, the total enstrophy variation converges to
the Heaviside function $\mathcal{H}$ as follows.
\begin{equation}
\int_{-\infty}^t \mathscr{Z}^\varepsilon(\tau) d\tau \longrightarrow -z_0 \mathcal{H}(t-t^\ast).
\label{Heaviside}
\end{equation}
If $z_0>0$, the enstrophy dissipation occurs. Let us here note that the solution to the EP-PV system is time reversible, since it
is a Hamiltonian  system. Hence,  as discussed in \cite{G.2}, even if the direction of time is reversed, we have  the same convergence (\ref{Heaviside}),
which claims that the self-similar triple collapse always dissipates the enstrophy as long as $z_0>0$.
This is the emergence of the irreversibility of time direction in the conservative dynamical system.  However, it is still unknown 
whether or not the enstrophy always dissipates in that limit, since the sign of $z_0$ has not yet been determined. 
The following corollary  gives a sufficient condition for the enstrophy dissipation, which is described in terms of the function $Z(r)$ 
coming from (\ref{Z_nm}):
\begin{equation}
Z(r) = \int_0^\infty s \left\vert 2\pi\widehat{h}(s) \right\vert^2  J_0 \left( s \sqrt{r} \right) ds. \label{Z}
\end{equation}

\begin{corollary}
Suppose that $Z(r)$ is monotone decreasing and downward-convex. Then, for any initial configuration satisfying the assumptions of Theorem~\ref{thm-dissipation} and $M \geq 0$, we have $z_0 > 0$. For the case of $M < 0$, if the functions $Z(r)$ and $H_P(r)$ satisfy the additional condition
\begin{equation}
Z^{\prime\prime}(r) H^\prime_P(r) - Z^\prime(r) H^{\prime\prime}_P(r) > 0,  \label{condi-dissipation}
\end{equation}
then we have $z_0 > 0$. 

\label{Dissipation}
\end{corollary}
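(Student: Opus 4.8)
The plan is to reduce the sign of $z_0$ to a pointwise Jensen-type inequality for the integrand $\mathscr{Z}_0$. Writing $a=1/\Gamma_1$, $b=1/\Gamma_2$, $c=1/\Gamma_3$ — so that $a+b+c=0$ by (\ref{condi1}), with $a,b>0>c$ and $a+b=-c>0$ — and abbreviating $u=L_{23}^2$, $v=L_{31}^2$, $w=L_{12}^2$, I would factor $\Gamma_1\Gamma_2\Gamma_3$ out of $\mathscr{Z}_0$ in (\ref{Z_0}), using $Z_{mn}=Z(L_{mn}^2)$ for $Z$ in (\ref{Z}), to obtain
\[
\mathscr{Z}_0(\tau)=-\frac{\Gamma_1\Gamma_2\Gamma_3}{2\pi}\,g(\tau),\qquad g(\tau)=a\,Z(u)+b\,Z(v)+c\,Z(w).
\]
Since $\Gamma_1\Gamma_2\Gamma_3<0$, the prefactor is positive, so $z_0$ has the same sign as $\int_{-\infty}^{\infty} g\,d\tau$, and it suffices to show $g(\tau)\ge 0$ for all $\tau$ with strict inequality on a set of positive measure (convergence of the integral being inherited from Theorem~\ref{thm-dissipation}). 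In these variables the two conserved quantities read $a u+b v+c w=M/(\Gamma_1\Gamma_2\Gamma_3)$ from (\ref{M}) and $a H_P(u)+b H_P(v)+c H_P(w)=-2\pi\mathscr{H}_c/(\Gamma_1\Gamma_2\Gamma_3)$ from (\ref{H}); hence $M\ge 0$ is equivalent to $a u+b v\le (a+b)w$, and $\mathscr{H}_c<0$ is equivalent to $a H_P(u)+b H_P(v)<(a+b)H_P(w)$.

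For $M\ge 0$, put $\bar r=(a u+b v)/(a+b)$, a convex combination of $u$ and $v$; the constraint becomes $w\ge \bar r$. Rewriting $g=(a+b)\big[\tfrac{a}{a+b}Z(u)+\tfrac{b}{a+b}Z(v)-Z(w)\big]$, I would bound the first two terms below by $Z(\bar r)$ using Jensen's inequality for the downward-convex $Z$, and then invoke the monotone decrease of $Z$ together with $w\ge\bar r$ to get $Z(w)\le Z(\bar r)$. Chaining these gives $g\ge (a+b)[Z(\bar r)-Z(w)]\ge 0$, so in this regime only the convexity and monotonicity of $Z$ are used.

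For $M<0$ this chain fails, because now $w<\bar r$ so that $Z(w)>Z(\bar r)$ and the monotonicity step reverses; this is exactly where (\ref{condi-dissipation}) enters. I would reparametrize $Z$ through the strictly increasing $H_P$, writing $Z=\widetilde Z\circ H_P$ with $\widetilde Z(\xi)=Z(H_P^{-1}(\xi))$. A short computation gives $\widetilde Z''=(Z''H_P'-Z'H_P'')/(H_P')^3$, so (\ref{condi-dissipation}) says precisely that $\widetilde Z$ is convex, while $Z'<0<H_P'$ makes $\widetilde Z$ decreasing. Setting $\xi_u=H_P(u)$, etc., and $\bar\xi=(a\xi_u+b\xi_v)/(a+b)$, the energy constraint $a\xi_u+b\xi_v+c\xi_w=\mathscr{h}$ (a negative constant, since $\mathscr{H}_c<0$) yields $\xi_w=\bar\xi-\mathscr{h}/(a+b)>\bar\xi$. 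Jensen for the convex $\widetilde Z$ then bounds $g=(a+b)\big[\tfrac{a}{a+b}\widetilde Z(\xi_u)+\tfrac{b}{a+b}\widetilde Z(\xi_v)-\widetilde Z(\xi_w)\big]$ below by $(a+b)[\widetilde Z(\bar\xi)-\widetilde Z(\xi_w)]$, which is nonnegative because $\widetilde Z$ is decreasing and $\xi_w>\bar\xi$. Thus $g\ge 0$ again.

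In both cases, strictness follows from the non-equilibrium hypothesis: equality throughout would force, via strict convexity and strict monotonicity, $u=v=w$ for all $\tau$, i.e. a permanently equilateral configuration, which is a relative equilibrium of (\ref{L_123}) and is excluded by the assumptions of Theorem~\ref{thm-dissipation}. Hence $\int g\,d\tau>0$ and $z_0>0$. I expect the main obstacle to be the $M<0$ case: one must justify the reparametrization globally along the orbit (checking that $u,v,w$ remain in the range where $Z$ and $H_P$ have the listed monotonicity and convexity, as recorded in Table~\ref{Tab_Function}), recognize that (\ref{condi-dissipation}) is equivalent to convexity of $Z$ in the $H_P$-variable, and correctly track the sign of the shift so that $\mathscr{H}_c<0$ pushes $\xi_w$ to the favorable side of $\bar\xi$. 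A secondary point to handle carefully is that the geometric constraint from $M$ and the energetic constraint from $\mathscr{H}_c$ play genuinely different roles, which is why the two regimes require different hypotheses.
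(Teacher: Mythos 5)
Your argument is correct, and the $M\geq 0$ case is essentially the paper's own proof: the same rewriting of $\mathscr{Z}_0$ as a positive multiple of a ``convex combination minus endpoint'' expression, followed by Jensen's inequality for the downward-convex $Z$ and the monotonicity step enabled by the sign of $M$ in the constraint (\ref{M}). For $M<0$, however, you take a genuinely more direct route. The paper fixes $L=L_{12}^2$ and $M$, parametrizes configurations by $\mu=L_{23}^2/L_{12}^2$ (invoking Lemma~\ref{RangeH-} to order $\mu$ and $\widehat{\mu}$ relative to $1$), proves that $Z_{L,M}(\mu)$ is decreasing and $H_{L,M}(\mu)$ is increasing (the latter via the monotonicity of $L_P$ and (\ref{dL_p})), and thereby reduces the problem to showing positivity on the level set $\mathscr{H}=0$, where Jensen for $Z_H=Z\circ H_P^{-1}$ applies with an exact equality constraint. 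You instead apply Jensen for the same reparametrized function $\widetilde{Z}=Z\circ H_P^{-1}$ pointwise in time, using only the sign of $\mathscr{H}_c$ to place $\xi_w=H_P(L_{12}^2)$ strictly above the convex combination $\bar{\xi}$, and then conclude by the monotone decrease of $\widetilde{Z}$. This bypasses Lemma~\ref{RangeH-}, the function $L_P$, and the intermediate monotonicity computations entirely, and it in fact proves the slightly stronger statement that (\ref{condi-dissipation}) together with $\mathscr{H}_c<0$ yields $z_0>0$ irrespective of the sign of $M$; both arguments identify (\ref{condi-dissipation}) as exactly the convexity of $Z$ in the $H_P$-variable, which is the essential insight. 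What the paper's longer detour buys is the level-curve picture in $\mu$ that is reused in Section~\ref{VBM} (via Lemma~\ref{asympt-}) to localize where (\ref{condi-dissipation}) must hold; your pointwise argument, like the corollary as stated, needs (\ref{condi-dissipation}) on the whole range swept by $L_{mn}^2$. Your handling of strictness (ruling out identically vanishing $\mathscr{Z}_0$ via the non-equilibrium hypothesis) is, if anything, more careful than the paper's bare strict inequalities.
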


\begin{remark}
While the EP-PV system and the $\alpha$PV system have the same Hamiltonian structure, the difference between them consists in the functions describing the Hamiltonian, whose 
correspondence are listed in Table~\ref{Tab_Function}. However, the following theorems and lemmas can be proven in the same way as those for the $\alpha$PV system in \cite{G.2}, 
since we just need to use the common properties shared with those functions in Table~\ref{Tab_Function}. Accordingly, the proofs are accomplished by formally replacing $B_K$, $K_0$
$h_0$, $h_K$ with $P_K$, $H_G$, $H_P$, $L_P$, which are not shown in this paper to avoid redundancy. 
\label{rem-proof}
\end{remark}
The following two theorems correspond to Theorem~2 and Theorem~3 of \cite{G.2}. 
\begin{theorem}
Under the same assumptions of Theorem~\ref{thm-dissipation}, in the $\varepsilon \rightarrow 0$ limit, the solution of (\ref{xEP-PV}) with $N =3$ converges to the self-similar collapsing solution for $t < t^\ast$ and the expanding solution for $t > t^\ast$ with the same value of the Hamiltonian $\mathscr{H}_c$ in the three point-vortex system.
\label{convergence}
\end{theorem}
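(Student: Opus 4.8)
The plan is to reduce the statement to the single $\varepsilon$-independent scaled system (\ref{XEP-PV})--(\ref{L_123}), to analyze its behavior as $\tau\to\pm\infty$, and then to pull the result back through the recovery formula (\ref{recover}). The starting point is that, by the scaling (\ref{scaled}), one may construct for each $\varepsilon$ a solution of (\ref{l-EPPV}) with $\mathscr{H}^\varepsilon=\mathscr{H}_c$ out of a \emph{single} reference solution $L_{mn}(\tau)$ of (\ref{L_123}) with $\mathscr{H}=\mathscr{H}_c$, via $l_{mn}^\varepsilon(t)=\varepsilon L_{mn}((t-t^\ast)/\varepsilon^2)$; a direct substitution confirms consistency, since the signed area scales like $\varepsilon^2$ and $P_K(l^\varepsilon/\varepsilon)=P_K(L)$. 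In particular the original invariant obeys $M^\varepsilon=\varepsilon^2 M\to 0$, so the limiting configuration automatically satisfies the classical collapse constraint $M=0$, regardless of the sign of the scaled invariant $M=M^1$ in (\ref{M}).

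First I would establish the long-time asymptotics of the scaled trajectory. Reducing (\ref{L_123}) by the two conserved quantities $\mathscr{H}=\mathscr{H}_c$ and $M$ leaves a one-dimensional flow for the triangle shape whose only equilibria are the relative equilibria (equilateral or collinear configurations). The window (\ref{H-Range}) is exactly the range of Hamiltonian values for which these equilibria are not attained along the orbit, and combined with the standing hypothesis of Theorem~\ref{thm-dissipation} that the solution does not converge to a relative equilibrium, this forces the orbit to be unbounded, i.e. $L_{mn}(\tau)\to\infty$ for every pair as $\tau\to\pm\infty$. Once the scale is large, the property $P_K(r)\to 1$ as $r\to\infty$ (consistent with $0\le P_K<1$ in Table~\ref{Tab_Function} and proved in Appendix~\ref{functions}) shows that (\ref{L_123}) reduces to the classical Euler three-vortex equations. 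The scale invariance $L\mapsto\lambda L$, $\tau\mapsto\lambda^2\tau$ of that limiting system, together with convergence of the shape to a fixed configuration, yields the self-similar rate $L_{mn}(\tau)\sim C_{mn}\sqrt{|\tau|}$, where the $C_{mn}$ are the side ratios of the classical collapse triangle determined by $\mathscr{H}_c$ and $M=0$.

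Then the recovery formula transfers this to the original variables. For fixed $t<t^\ast$ one has $\tau=(t-t^\ast)/\varepsilon^2\to-\infty$ as $\varepsilon\to0$, and the self-similar rate gives
\[
l_{mn}^\varepsilon(t)=\varepsilon\,L_{mn}\!\left(\frac{t-t^\ast}{\varepsilon^2}\right)\sim\varepsilon\,C_{mn}\sqrt{\frac{t^\ast-t}{\varepsilon^2}}=C_{mn}\sqrt{t^\ast-t},
\]
so the powers of $\varepsilon$ cancel and the side lengths converge to those of the classical self-similar collapsing solution; for $t>t^\ast$ the identical computation with $\tau\to+\infty$ produces the expanding solution. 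To identify the Hamiltonian of the limit, note that $H_P(L_{mn}^2)=\log L_{mn}+H_G(L_{mn})$ by (\ref{H_P}), that $H_G(L_{mn})\to 0$ as $L_{mn}\to\infty$ (Table~\ref{Tab_Function}, cf. (\ref{H_G-zero})), and that (\ref{condi1}) gives $\Gamma_1\Gamma_2+\Gamma_2\Gamma_3+\Gamma_3\Gamma_1=0$, which cancels the $\log$-divergent part of $\mathscr{H}$ in (\ref{H}) and leaves the purely logarithmic classical Hamiltonian as a finite constant equal to $\mathscr{H}_c$. This is the same cancellation behind the identity $\mathscr{H}^\varepsilon=\mathscr{H}$ noted after (\ref{condi1}).

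The main obstacle is the asymptotic analysis of the scaled orbit: proving that it genuinely escapes to $L_{mn}=\infty$ while locking onto a single self-similar shape, rather than converging to a relative equilibrium or oscillating indefinitely. This is precisely where the Hamiltonian window (\ref{H-Range}) and the no-relative-equilibrium hypothesis enter, and where the monotonicity and convexity of $P_K$ and $H_P$ recorded in Table~\ref{Tab_Function} are used to control the reduced flow and to pin down both the limiting shape and the rate $C_{mn}\sqrt{|\tau|}$. A secondary subtlety is that the overall rotation accumulated during the collapse diverges like $-\log\varepsilon$, so the convergence of the full positions ${\bm x}_n^\varepsilon(t)$ is understood modulo the rotational symmetry of the limiting problem: the side lengths converge genuinely, while for each $\varepsilon$ the configuration approaches a rotated member of the rotation-invariant family of self-similar solutions. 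By Remark~\ref{rem-proof}, once the properties of $P_K,H_G,H_P,L_P$ are in hand, the argument is formally identical to the proof of Theorem~2 in \cite{G.2}, replacing $B_K,K_0,h_0,h_K$ by $P_K,H_G,H_P,L_P$.
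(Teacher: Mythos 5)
Your proposal is correct and follows essentially the same route as the paper, which itself proves Theorem~\ref{convergence} only by invoking Remark~\ref{rem-proof}: the argument is that of Theorem~2 in \cite{G.2} with $B_K, K_0, h_0, h_K$ replaced by $P_K, H_G, H_P, L_P$, using exactly the properties in Table~\ref{Tab_Function} and the asymptotics (\ref{L-asympt}) that you reconstruct. One small slip: the decay $H_G(r)\to 0$ as $r\to\infty$ is justified not by (\ref{H_G-zero}) (which concerns $r\to 0$) but by the bound $\vert H_G(L_{mn})\vert \leq c/L_{mn}$ established in Section~\ref{proof}.
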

This indicates that the three $\varepsilon$-vortex points collapse self-similarly at $t=t^\ast$ in the $\varepsilon \rightarrow 0$ limit. Hence the enstrophy dissipation occurs at the event of the collapse. 
As a matter of fact, (\ref{condi1}) is the necessary condition for the existence of the enstrophy dissipation via the triple collapse, which is stated as follows.
\begin{theorem}
Suppose $L_{mn}(t) \rightarrow +\infty$ as $t \rightarrow \pm \infty$ for $m \neq n$. Then, (\ref{condi1}) holds.  \label{necessary}
\end{theorem}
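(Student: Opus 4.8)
The plan is to derive (\ref{condi1}) purely from the two conserved quantities of the scaled three-vortex system, namely $\mathscr{H}$ in (\ref{H}) and $M$ in (\ref{M}), together with the large-argument behaviour of the auxiliary functions recorded in Table~\ref{Tab_Function}. First I would fix the sign pattern of the strengths: since $M$ is constant while every $L_{mn}(t) \to +\infty$, the three products $\Gamma_2\Gamma_3,\Gamma_3\Gamma_1,\Gamma_1\Gamma_2$ cannot all be positive (otherwise $M \to +\infty$), and no $\Gamma_n$ can vanish (a null strength would freeze one of the distances). Hence exactly one strength has the opposite sign, and I may take $\Gamma_1,\Gamma_2 > 0 > \Gamma_3$; writing $p = \Gamma_1\Gamma_2$, $q = -\Gamma_2\Gamma_3$, $r = -\Gamma_3\Gamma_1$, all positive, the target (\ref{condi1}) is equivalent to $p = q + r$.

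Next I would extract two asymptotic relations. From the definition (\ref{H_P}) and the fact (following from the unit-mass normalisation (\ref{h^eps-1}) and the decay of $h$, see Table~\ref{Tab_Function} and the appendix) that $H_G(\rho) \to 0$ as $\rho \to \infty$, we have $H_P(L^2) = \log L + H_G(L)$ with $H_G(L_{mn}) \to 0$. Setting $a = \log L_{12}$, $b = \log L_{23}$, $c = \log L_{31}$, all tending to $+\infty$, conservation of $\mathscr{H}$ then gives the logarithmic constraint $p\,a - q\,b - r\,c \to -2\pi\mathscr{H}$, a finite constant, while (\ref{M}) gives the exact algebraic constraint $p\,e^{2a} - q\,e^{2b} - r\,e^{2c} = M$. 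Solving the latter for $a$ and taking logarithms, the boundedness of $M$ forces $a = \max(b,c) + O(1)$; that is, the distance $L_{12}$ between the two like-signed vortices is, up to a bounded factor, the largest side.

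Finally I would feed $a = \max(b,c) + O(1)$ into the logarithmic constraint. Bounding $b,c \le a + O(1)$ immediately yields $p - q - r \le 0$. To exclude the strict inequality I would run a short case analysis according to whether $b$ or $c$ realises the maximum: in either case the logarithmic constraint pins the slower-growing distance to grow like $\tfrac{p-q}{r}\,a$ (respectively $\tfrac{p-r}{q}\,a$), and re-inserting this into the algebraic constraint fixes the limiting ratio of the two largest sides. If $p \ne q + r$ this ratio differs from $1$, so one side eventually exceeds the sum of the other two, contradicting the triangle inequality satisfied by $L_{12},L_{23},L_{31}$. Hence $p = q + r$, which is exactly (\ref{condi1}).

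I expect the main obstacle to be the last step: the two conservation laws alone do not force the motion to be asymptotically self-similar, so one must control the possibly oscillating relative growth rates of the three distances and invoke the geometric triangle-inequality constraint to eliminate the degenerate configurations in which one distance lags behind. Care is also needed to justify $H_G(L_{mn}) \to 0$ under the paper's general hypotheses on $h$ rather than only for the explicit kernel $K_0$, but this is a routine consequence of the far-field asymptotics of the Poisson problem (\ref{Poisson_G}) together with the normalisation (\ref{h^eps-1}).
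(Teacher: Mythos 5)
Your argument is correct and is essentially the proof the paper intends: the text defers the details to Theorem~3 of \cite{G.2}, whose argument likewise combines the conservation of $\mathscr{H}$ and $M$ with the far-field behaviour $H_P(L_{mn}^2)=\log L_{mn}+H_G(L_{mn})$, $H_G\to 0$, and the triangle inequality to force $\Gamma_1\Gamma_2+\Gamma_2\Gamma_3+\Gamma_3\Gamma_1=0$, i.e.\ (\ref{condi1}). Your subsequence/case analysis correctly addresses the one genuine subtlety --- that the two invariants alone do not fix the asymptotic shape of the triangle --- so nothing essential is missing.
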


\begin{remark}
For $M<0$, as we see in Section~\ref{proof-cor} and Section~\ref{VBM},  the following lemmas help us to check the condition (\ref{condi-dissipation}) whose proofs are the same as those of Lemma~5, Lemma~6 and Proposition~7 of \cite{G.2}. 
\begin{lemma}
Suppose (\ref{condi1}) and $M < 0$. Then, if $\mathscr{H} \leq 0$, then we have either $\gamma_1 > 1 > \gamma_2$ or $\gamma_2 > 1 > \gamma_1$, where $\gamma_1=L_{23}/L_{12}$ and
$\gamma_2 = L_{31}/L_{12}$, and these relations can not change throughout the evolution.
\label{RangeH-}
\end{lemma}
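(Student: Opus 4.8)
The plan is to extract the result from the two conserved quantities $M$ and $\mathscr{H}$ together with the constraint (\ref{condi1}). First I would recast (\ref{condi1}) in its equivalent multiplicative form $\Gamma_1\Gamma_2 + \Gamma_2\Gamma_3 + \Gamma_3\Gamma_1 = 0$, obtained by multiplying through by $\Gamma_1\Gamma_2\Gamma_3$. Abbreviating $a = \Gamma_2\Gamma_3$, $b = \Gamma_3\Gamma_1$, $c = \Gamma_1\Gamma_2$, the normalization $\Gamma_1 \geq \Gamma_2 > 0 > \Gamma_3$ gives $a,b < 0 < c$ together with $c = -(a+b)$. Substituting $c = -(a+b)$ into (\ref{M}) and (\ref{H}) and dividing by $L_{12}^2$, which is strictly positive for all $t$ because the $\varepsilon$-point vortices never collapse (Proposition~\ref{pv-solution}), I can express both invariants through the ratios alone: $M/L_{12}^2 = a(\gamma_1^2 - 1) + b(\gamma_2^2 - 1)$, while $\mathscr{H}$ becomes a linear combination, with coefficients $a$ and $b$, of $H_P(L_{23}^2) - H_P(L_{12}^2)$ and $H_P(L_{31}^2) - H_P(L_{12}^2)$.

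Next I would read off the sign constraints. Since $a,b < 0$, the hypothesis $M < 0$ rules out the case $\gamma_1 < 1$ and $\gamma_2 < 1$ simultaneously, because that would make both of $\gamma_1^2 - 1$ and $\gamma_2^2 - 1$ negative and hence $M > 0$. Dually, since $H_P$ is monotone increasing (Table~\ref{Tab_Function}), each difference $H_P(L_{mn}^2) - H_P(L_{12}^2)$ carries the same sign as the corresponding $\gamma - 1$; therefore $\mathscr{H} \leq 0$ rules out $\gamma_1 > 1$ and $\gamma_2 > 1$ simultaneously, since both differences positive against negative coefficients $a,b$ would force $\mathscr{H} > 0$. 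The two boundary cases are then discarded by cross-feeding: if $\gamma_1 = 1$, the first $\mathscr{H}$-term vanishes, so $\mathscr{H} \leq 0$ forces $\gamma_2 \leq 1$, whereas $M < 0$ forces $\gamma_2 > 1$, a contradiction; the case $\gamma_2 = 1$ is symmetric. This leaves exactly the two alternatives $\gamma_1 > 1 > \gamma_2$ and $\gamma_2 > 1 > \gamma_1$.

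For the claim that the relation persists in time, I would argue by continuity. The quantities $M$ and $\mathscr{H}$ are conserved along (\ref{L_123}), so the dichotomy above holds at every instant, and $t \mapsto \gamma_1(t),\gamma_2(t)$ are continuous and positive because $L_{12}(t) > 0$ for all $t$. A passage from $\gamma_1 > 1 > \gamma_2$ to $\gamma_2 > 1 > \gamma_1$ would require $\gamma_1(t) - 1$ to change sign and hence, by the intermediate value theorem, to vanish at some time, i.e.\ $\gamma_1 = 1$ — which has just been excluded. Thus the trajectory remains in whichever of the two disjoint regions it starts in. I expect the only genuinely delicate step to be the exclusion of the equality cases $\gamma_i = 1$, which is what makes the inequalities strict and, in turn, makes the continuity argument airtight; the rest is sign bookkeeping enabled by (\ref{condi1}) and the monotonicity of $H_P$.
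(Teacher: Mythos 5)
Your proof is correct and follows essentially the same route as the paper, which itself defers to Lemma~5 of \cite{G.2}: express the two invariants $M$ and $\mathscr{H}$ through the ratios $\gamma_1,\gamma_2$ using the multiplicative form of (\ref{condi1}), use the sign of the coefficients $\Gamma_2\Gamma_3,\Gamma_3\Gamma_1<0$ together with the strict monotonicity of $H_P$ to exclude the cases $\gamma_1,\gamma_2$ both on the same side of $1$ (and the equality cases), and conclude persistence by conservation and continuity. Nothing further is needed.
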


\begin{lemma}
Suppose (\ref{condi1}), $M < 0$ and $\mathscr{H} \leq 0$. Then, every level curve of the Hamiltonian starting from collinear configurations
 is monotone increasing as a function of $L_{23}^2$ and it  asymptotically approaches infinity along a straight line as $L_{23}^2 \rightarrow \infty$. 
\label{asympt-}
\end{lemma}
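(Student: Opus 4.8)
The plan is to realize the level curve as the common level set of the two length-invariants $\mathscr{H}$ in (\ref{H}) and $M$ in (\ref{M}), and to study it by implicit differentiation. Writing $u=L_{23}^2$, $v=L_{12}^2$, $w=L_{31}^2$ and abbreviating $a=\Gamma_2\Gamma_3$, $b=\Gamma_3\Gamma_1$, $c=\Gamma_1\Gamma_2$, condition (\ref{condi1}) gives the crucial identity $a+b+c=0$, together with $a,b<0<c$ under the standing ordering $\Gamma_1\geq\Gamma_2>0>\Gamma_3$. The invariants read $M=au+bw+cv$ and $-2\pi\mathscr{H}=aH_P(u)+bH_P(w)+cH_P(v)$. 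Fixing $M$ and $\mathscr{H}=\mathscr{H}_c$ and regarding the resulting curve as parametrized by $u$, I would differentiate both relations to get the linear system $a\,du+b\,dw+c\,dv=0$ and $aH_P'(u)\,du+bH_P'(w)\,dw+cH_P'(v)\,dv=0$, and solve it for the two slopes
\[
\frac{dv}{du}=\frac{a\bigl(H_P'(w)-H_P'(u)\bigr)}{c\bigl(H_P'(v)-H_P'(w)\bigr)},\qquad \frac{dw}{du}=\frac{a\bigl(H_P'(u)-H_P'(v)\bigr)}{b\bigl(H_P'(v)-H_P'(w)\bigr)}.
\]

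For the monotonicity I would read off the signs of these quotients. By Table~\ref{Tab_Function}, $H_P$ is increasing and upward convex, so $H_P'>0$ is strictly decreasing; hence the sign of each difference $H_P'(p)-H_P'(q)$ is opposite to that of $p-q$. The ordering of $u,v,w$ is supplied by Lemma~\ref{RangeH-}: under $M<0$ and $\mathscr{H}\leq0$ one is in the case $\gamma_1>1>\gamma_2$, i.e.\ $u>v>w$, or in the case $\gamma_2>1>\gamma_1$, i.e.\ $w>v>u$, and this ordering cannot change along the orbit. A direct sign count, using $a,b<0<c$ in each of the two cases, then shows that both $dv/du>0$ and $dw/du>0$; the identity $a+b+c=0$ furnishes the consistency check $a+b\,(dw/du)+c\,(dv/du)=0$. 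Because the curve starts from a collinear configuration (the locus $A=0$, where the right-hand side of (\ref{L_123}) degenerates) and the inequalities persist, the level curve is globally the graph of two strictly increasing functions $v=v(u)$, $w=w(u)$, which is the asserted monotonicity in $L_{23}^2$.

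For the asymptotic straight line I would expand $H_P$ for large argument. Since $H_P(r)=\tfrac12\log r+H_G(\sqrt r)$ with $H_G$ decreasing to $0$, one has $H_P'(r)=\tfrac{1}{2r}+o(r^{-1})$, so along the curve with $u\to\infty$ (whence $v,w\to\infty$ by monotonicity and the $M$-constraint) the slope formula becomes, after substituting $v\sim\rho_1u$, $w\sim\rho_2 u$,
\[
\frac{dw}{du}\longrightarrow\frac{a(\rho_1-1)\rho_2}{b(\rho_2-\rho_1)}.
\]
Dividing $M=au+bw+cv$ by $u$ and letting $u\to\infty$ gives $a+b\rho_2+c\rho_1=0$, which, using $c=-a-b$, is exactly the relation forcing this limiting slope to equal $\rho_2$; thus the limiting ratios are self-consistent and finite, and the curve leaves to infinity with asymptotically constant slope, i.e.\ along a straight line. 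The offsets $v-\rho_1u$ and $w-\rho_2u$ converge provided the remainder $H_P'(r)-\tfrac{1}{2r}$ is integrable at infinity, which follows from the decay hypotheses on $h$ imposed in Theorem~\ref{thm-dissipation}. The essential structural point is that $a+b+c=0$ annihilates the coefficient of $\log u$ in $-2\pi\mathscr{H}$, so the $\mathscr{H}_c$-level is compatible with the scale-invariant (self-similar) ratios rather than pinning down the overall scale.

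The step I expect to be the main obstacle is the asymptotic analysis: turning the formal limit of the slope into a rigorous statement that $v/u$ and $w/u$ actually converge and that the curve possesses a genuine linear asymptote. This requires uniform quantitative control of the subleading terms $H_G(\sqrt r)$ and $H_G'(\sqrt r)$ (equivalently of $H_P'(r)-\tfrac{1}{2r}$) all the way out, together with a continuity or ODE-trapping argument ensuring the ratios cannot oscillate; the monotonicity part, by contrast, is essentially sign bookkeeping. In keeping with Remark~\ref{rem-proof}, all of this parallels the proof of Lemma~6 of \cite{G.2}, with $K_0$, $h_0$ replaced by $H_G$, $H_P$ and only the common properties listed in Table~\ref{Tab_Function} being invoked.
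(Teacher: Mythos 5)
Your argument is essentially the intended one. Bear in mind that the paper does not actually write out a proof of this lemma: by Remark~\ref{rem-proof} it is imported from Lemma~6 of \cite{G.2}, with $h_0$, $h_K$ formally replaced by $H_P$, $L_P$ and only the shared properties of Table~\ref{Tab_Function} being used. Your reconstruction invokes exactly those properties (positivity and strict monotonicity of $H_P'$, i.e.\ upward convexity of $H_P$) together with the ordering of $u,v,w$ supplied by Lemma~\ref{RangeH-}, and the implicit differentiation of the pair $(M,\mathscr{H})$ with the ensuing sign count is correct in both cases $u>v>w$ and $w>v>u$; so the monotonicity half is complete. One small slip: the identity $a+b\,(dw/du)+c\,(dv/du)=0$ is nothing but the differentiated $M$-constraint and holds irrespective of $a+b+c=0$, so it is not the ``consistency check furnished by $a+b+c=0$'' — harmless, but worth correcting.

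The genuinely incomplete step is the one you flag yourself: for the straight-line asymptote you must prove that $v/u$ and $w/u$ converge, not merely that the putative limits $\rho_1,\rho_2$ and the limiting slope are mutually consistent. This closes in a standard way and does not need the integrability of $H_P'(r)-\frac{1}{2r}$ you invoke. In the case $u>v>w$ the ratios $(x,y)=(w/u,v/u)$ lie in the compact square $[0,1]^2$. Dividing the $M$-invariant by $u$ gives $a+bx+cy\to 0$, and in the Hamiltonian constraint the identity $a+b+c=0$ annihilates the $\log u$ term while $H_G(\sqrt{\cdot})\to 0$, leaving $b\log x+c\log y\to -4\pi\mathscr{H}_c$; any subsequential limit must therefore solve the pair $a+bx+cy=0$, $b\log x+c\log y=-4\pi\mathscr{H}_c$ (and cannot sit on the boundary $x=0$ or $y=0$, where the second expression diverges). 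Along the line $a+bx+cy=0$ the function $b\log x+c\log y$ has derivative $b(y-x)/(xy)$, which keeps one sign on the admissible arc because $x<y$ there; hence the limiting system has a unique solution, every subsequential limit coincides with it, and the full ratios converge. With that paragraph inserted, your proof is complete and follows the same route as the cited Lemma~6 of \cite{G.2}.
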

\end{remark}

\subsection{Proof of Theorem~\ref{thm-dissipation}}
\label{proof}

Formally, it is easy to show the convergence of $\mathscr{Z}^\varepsilon$ and $\mathscr{D}^\varepsilon$ in the sense of distributions. 
For any compactly supported smooth function $\phi(\tau)$, if $\mathscr{Z}_0$ and $\mbox{d}E_0/\mbox{d}\tau$ decay rapidly enough to be integrable on $\mathbb{R}$  
and $E_0$ vanishes at infinity,  we have
\begin{align*}
\langle \mathscr{Z}^\varepsilon, \phi \rangle & = - \int_{-\infty}^\infty \frac{1}{\varepsilon^2} \mathscr{Z}_0 \left( \frac{t - t^\ast}{\varepsilon^2} \right) \phi (t) dt = - \int_{-\infty}^\infty \mathscr{Z}_0(\tau) \phi (\varepsilon^2 \tau + t^\ast) d\tau \\
& \rightarrow - \phi(t^\ast) \int_{-\infty}^\infty \mathscr{Z}_0(\tau)d\tau = - z_0 \phi(t^\ast), 
\end{align*}
and
\begin{align*}
\langle \mathscr{D}_E^\varepsilon, \phi \rangle & = \int_{-\infty}^\infty \frac{1}{\varepsilon^2} \frac{dE_0}{d\tau} \left( \frac{t - t^\ast}{\varepsilon^2} \right) \phi (t) dt  = \int_{-\infty}^\infty \frac{dE_0}{d\tau} (\tau) \phi (\varepsilon^2 \tau + t^\ast) d\tau \\
& \rightarrow \phi(t^\ast) \int_{-\infty}^\infty \frac{dE_0}{d\tau}(\tau)d\tau = \phi(t^\ast) \left[ E_0(\tau) \right]_{-\infty}^\infty = 0,
\end{align*}
as $\varepsilon \rightarrow 0$. In order to make above argument mathematically rigorous, since both $\mathscr{Z}_0$ and $\mbox{d}E_0/\mbox{d}\tau$ are continuous functions on $\mathbb{R}$, we
will show that those functions decay rapidly, and $E_0$ vanishes at infinity.

We first remark that any solution of (\ref{XEP-PV}) subject to the assumptions of Theorem~\ref{thm-dissipation} satisfies
\begin{equation}
L_{mn}(t) \sim  \mathcal{O}(\vert t\vert^{1/2}), \qquad t \rightarrow \pm \infty.
\label{L-asympt}
\end{equation}
This asymptotic behavior is obtained by investigating the level curves of the Hamiltonian. Its proof proceeds in the same way as
that of Section~4 in \cite{G.2} as we mention in Remark~\ref{rem-proof}.

To show that the value of $z_0$ given in (\ref{z_0}) is well-defined, it is sufficient to see
 that $Z_{mn}(\tau)$ is integrable on $\mathbb{R}$. Considering the relation between the Fourier transform (\ref{FT}) and the Hankel transform (\ref{HT}), we find
\begin{equation*}
Z_{mn} = \int_0^\infty s \left\vert 2\pi\widehat{h}(s) \right\vert^2  J_0 \left( s L_{mn} \right) ds = (2 \pi)^2 \mathscr{F}\left[ \vert \widehat{h} \vert^2 \right]({\bm x}),
\end{equation*}
in which $L_{mn} = \vert{\bm x}\vert$. It follows from 
\begin{equation*}
\vert \widehat{f} \vert^2 = \left( \mathscr{F}\left[ f \right] \right)^2 = \frac{1}{2 \pi} \mathscr{F}\left[ f \ast f \right], \qquad \mathscr{F}[f] = \frac{1}{2 \pi} \mathscr{F}^{-1}[f]
\end{equation*}
for any radial function $f$ that we obtain
\begin{equation*}
Z_{mn}(\tau) = (2 \pi)^2 \mathscr{F}\left[ \vert \widehat{h}\vert^2 \right]({\bm x}) = \mathscr{F}^{-1}\left[ \mathscr{F}\left[ h \ast h \right] \right]({\bm x}) = \int_{\mathbb{R}^2} h({\bm x} - {\bm y}) h({\bm y}) d{\bm y}.
\end{equation*}
Since $h_r(r)$ is positive and monotone decreasing, it follows that
\begin{align*}
\int_{\mathbb{R}^2} h_r(\vert{\bm x} - {\bm y}\vert) h_r(\vert{\bm y}\vert) d{\bm y} &= \int_{\vert{\bm y}\vert \leq \vert{\bm x}\vert/2} h_r(\vert{\bm x} - {\bm y}\vert) h_r(\vert{\bm y}\vert) d{\bm y} + \int_{\vert{\bm y}\vert > \vert{\bm x}\vert/2} h_r(\vert{\bm x} - {\bm y}\vert) h_r(\vert{\bm y}\vert) d{\bm y} \\
&\leq h_r\left( \frac{\vert{\bm x}\vert}{2} \right)  \left[ \int_{\vert{\bm y}\vert \leq \vert{\bm x}\vert/2} h_r(\vert{\bm y}\vert) d{\bm y} + \int_{\vert{\bm y}\vert > \vert{\bm x}\vert/2} h_r(\vert{\bm x} - {\bm y}\vert)  d{\bm y} \right] \\
&\leq 2 \Vert h \Vert_{L^1} h_r\left( \frac{\vert{\bm x}\vert}{2} \right).
\end{align*}
Then, for sufficiently large $\tau_0 > 0$, owing to (\ref{L-asympt}),  we have
\begin{align*}
\int_{\tau_0}^\infty Z_{mn}(\tau)d\tau &= \int_{\tau_0}^\infty \int_{\mathbb{R}^2} h({\bm x} - {\bm y}) h({\bm y}) d{\bm y} d\tau \leq 2 \Vert h \Vert_{L^1} \int_{\tau_0}^\infty h_r\left( \frac{L_{mn}(\tau)}{2} \right) d\tau  \\
&\leq 2 \Vert h \Vert_{L^1} \int_{\tau_0}^\infty h_r\left( c_0 \tau^{1/2} \right) d\tau = c \Vert h \Vert_{L^1} \int_{c_0 \tau_0^{1/2}}^\infty r h_r(r) dr \leq c \Vert h \Vert_{L^1}^2 
\end{align*}
and similarly
\begin{equation*}
\int_{-\infty}^{-\tau_0}Z_{mn}(\tau)d\tau = \int_{- \infty}^{-\tau_0} \int_{\mathbb{R}^2} h({\bm x}- {\bm y}) h({\bm y}) d{\bm y} d\tau \leq c \Vert h \Vert_{L^1}^2.
\end{equation*}
It is easy to check that $h \in L^2(\mathbb{R}^2)$ and
\begin{equation*}
\int^{\tau_0}_{-\tau_0} \int_{\mathbb{R}^2} h({\bm x}- {\bm y}) h({\bm y}) d{\bm y} d\tau \leq c \Vert h \Vert_{L^2}^2.
\end{equation*}
Therefore, we conclude that $Z_{mn}(\tau)$ is integrable on $\mathbb{R}$ so that $z_0$ is finite.

Next, we show that $E_0(\tau)$ vanishes at infinity and $\mbox{d}E_0/\mbox{d}\tau$ is integrable on $\mathbb{R}$. Similar to the enstrophy, we confirm those claims for $E_{mn}$ instead of $E_0$. Let us recall the definition of $E_{mn}$:
\begin{align*}
E_{mn}(\tau) &= - H_G\left( L_{mn}(\tau) \right) +  \int_0^\infty \frac{1}{s} \left( 1 - \left\vert 2\pi\widehat{h}(s) \right\vert^2 \right) J_0 \left( s L_{mn}(\tau) \right) ds \\
& \equiv - H_G\left( L_{mn}(\tau) \right) + E_J\left( L_{mn}(\tau) \right),
\end{align*}
where
\begin{equation}
E_J(r) = \int_0^\infty \frac{1}{s} \left( 1 - \left\vert 2\pi\widehat{h}(s) \right\vert^2 \right) J_0 \left( s r \right) ds. 
\label{E_J}
\end{equation}
Regarding the function $H_G(r)$ in (\ref{H_G}), we have
\begin{equation*}
H_G(\vert{\bm x}\vert) = - \log{\vert{\bm x}\vert} - 2 \pi G_r^1(\vert{\bm x}\vert) = \int_{\mathbb{R}^2} \left( \log{\frac{\vert{\bm x} - {\bm y}\vert}{\vert{\bm x}\vert}} \right) h({\bm y}) d{\bm y}.
\end{equation*}
By dividing $\mathbb{R}^2$ into three domains and evaluating the integration on these domains separately, we obtain
\begin{align*}
\left\vert H_G(\vert{\bm x}\vert) \right\vert \leq& \int_{ \vert\vert{\bm y}\vert - \vert{\bm x}\vert\vert \leq \eta \vert{\bm x}\vert} \left\vert \log{\frac{\vert{\bm x} - {\bm y}\vert}{\vert{\bm x}\vert}} \right\vert h({\bm y}) d{\bm y} \\
& + \int_{ \eta \vert{\bm x}\vert < \vert\vert{\bm y}\vert - \vert{\bm x}\vert\vert \leq \vert{\bm x}\vert/ \eta} \left\vert \log{\frac{\vert{\bm x} - {\bm y}\vert}{\vert{\bm x}\vert}} \right\vert h({\bm y}) d{\bm y} \\
& + \int_{ \vert\vert{\bm y}\vert - \vert{\bm x}\vert\vert > \vert{\bm x}\vert/ \eta} \left\vert \log{\frac{\vert{\bm x} - {\bm y}\vert}{\vert{\bm x}\vert}} \right\vert h({\bm y}) d{\bm y}.
\end{align*}
The first term is estimated as
\begin{align*}
\int_{ \vert\vert{\bm y}\vert - \vert{\bm x}\vert\vert \leq \eta \vert{\bm x}\vert} \left\vert \log{\frac{\vert{\bm x} - {\bm y}\vert}{\vert{\bm x}\vert}} \right\vert h({\bm y}) d{\bm y} &= \int_{\left\vert \vert{\bm z}\vert - 1 \right\vert \leq \eta} \left\vert \log{ \left\vert\frac{{\bm x}}{\vert{\bm x}\vert} - {\bm z} \right\vert} \right\vert h(\vert{\bm x}\vert{\bm z}) \vert{\bm x}\vert^2 d{\bm z} \\
&\leq \frac{\Vert \chi_3 h \Vert_{L^\infty}}{\vert{\bm x}\vert} \int_{ \left\vert \vert{\bm z}\vert - 1 \right\vert \leq \eta} \frac{1}{\vert{\bm z}\vert^3}\left\vert \log{ \left\vert\frac{{\bm x}}{\vert{\bm x}\vert} - {\bm z} \right\vert} \right\vert d{\bm z} \\
&\leq \frac{\Vert \chi_3 h \Vert_{L^\infty}}{\vert{\bm x}\vert(1 - \eta)^3} \int_{ \left\vert \vert{\bm z}\vert - 1 \right\vert \leq \eta} \left\vert \log{ \left\vert\frac{{\bm x}}{\vert{\bm x}\vert} - {\bm z} \right\vert} \right\vert d{\bm z} \leq \frac{c_\eta}{\vert{\bm x}\vert} \Vert \chi_3 h \Vert_{L^\infty}.
\end{align*}
Since $\eta \vert{\bm x} \vert< \vert\vert {\bm y} \vert - \vert{\bm x}\vert\vert \leq \vert{\bm x}\vert/ \eta $ yields $\eta < \vert{\bm x} - {\bm y}\vert / \vert{\bm x}\vert < 2 + \eta^{-1}$, we estimate the second term as
\[
\int_{ \eta \vert{\bm x}\vert < \vert\vert{\bm y}\vert - \vert{\bm x}\vert\vert \leq \vert{\bm x}\vert/ \eta} \left\vert \log{\frac{\vert{\bm x} - {\bm y}\vert}{\vert{\bm x}\vert}} \right\vert h({\bm y}) d{\bm y} \leq c_\eta \int_{ \eta \vert{\bm x}\vert < \vert\vert{\bm y}\vert - \vert{\bm x}\vert\vert  \leq \vert{\bm x}\vert/ \eta} h({\bm y}) d{\bm y}  \leq \frac{c_\eta}{\vert{\bm x}\vert} \Vert \chi_1 h \Vert_{L^1}.
\]
Finally, the third term is estimated as
\begin{align*}
\int_{ \vert\vert{\bm y}\vert - \vert{\bm x}\vert\vert > \vert{\bm x}\vert/ \eta} \left\vert \log{\frac{\vert{\bm x} - {\bm y}\vert}{\vert{\bm x}\vert}} \right\vert h({\bm y}) d{\bm y} & = \int_{ \vert\vert{\bm z}\vert - 1\vert > 1 / \eta} \left\vert \log{ \left\vert\frac{{\bm x}}{\vert{\bm x}\vert} - {\bm z} \right\vert} \right\vert h( \vert{\bm x}\vert {\bm z}) \vert{\bm x}\vert^2 d{\bm z} \\
&\leq \int_{ \vert{\bm z}\vert > 1 + 1 / \eta}  \log{ \left( 1 + \vert{\bm z}\vert \right) } h( \vert{\bm x}\vert {\bm z}) \vert{\bm x}\vert^2 d{\bm z} \\
&\leq \int_{ \vert{\bm z}\vert > 1 + 1 / \eta} \vert{\bm z}\vert h( \vert{\bm x}\vert {\bm z}) \vert{\bm x}\vert^2 d{\bm z} \\
&\leq \frac{1}{\vert{\bm x}\vert} \int_{ \vert{\bm z}\vert > (1 + 1 / \eta)\vert{\bm x}\vert} \vert{\bm z}\vert h( {\bm z})  d{\bm z} \leq \frac{1}{\vert{\bm x}\vert} \Vert \chi_1 h \Vert_{L^1}.
\end{align*}
Remembering that $h$ satisfies $\chi_3 h \in L^\infty(\mathbb{R}^2)$ and $\chi_1 h \in L^1(\mathbb{R}^2)$ under the assumptions of Theorem~\ref{thm-dissipation}, we obtain
\begin{equation*}
\left\vert H_G(L_{mn}) \right\vert \leq \frac{c}{L_{mn}} \left( \Vert \chi_3 h \Vert_{L^\infty} + \Vert \chi_1 h \Vert_{L^1} \right).
\end{equation*}
In order to investigate the function $E_J(r)$, let us observe the properties of the Fourier and Hankel transforms. It is easy to check that $\Vert \mathscr{F}[f] \Vert_{L^\infty} \leq (2 \pi)^{-1} \Vert f \Vert_{L^1}$ and 
\begin{equation*}
\vert {\bm k} \vert \frac{\mbox{d} \widehat{f}}{\mbox{d}s}( \vert {\bm k}\vert) = - \frac{1}{2 \pi}\int_{\mathbb{R}^2} \left( 2 f(\vert{\bm x}\vert) + \vert{\bm x}\vert f^\prime(\vert{\bm x}\vert) \right) e^{- i {\bm x} \cdot {\bm k}} d{\bm x}.
\end{equation*}
Then, we find $\Vert\widehat{h}\Vert_{L^\infty} \leq (2 \pi)^{-1} \Vert h \Vert_{L^1}$ and $\Vert \chi_1 \mbox{d}\widehat{h}/\mbox{d}s \Vert_{L^\infty} \leq (2 \pi)^{-1} ( 2 \Vert h \Vert_{L^1} + \Vert \chi_1 \nabla h \Vert_{L^1})$, and we also have $\Vert \mbox{d}\widehat{h}/\mbox{d}s \Vert_{L^\infty} \leq (2 \pi)^{-1} \Vert \chi_1 h \Vert_{L^1}$. Since $\widehat{h}(0)=1$ owing to (\ref{h^eps-1}), the mean value theorem yields
\begin{equation*}
\left\vert 2 \pi \widehat{h}(s) - 1 \right\vert = \left\vert 2 \pi \widehat{h}(s) - 2\pi\widehat{h}(0) \right\vert \leq 2 \pi s \int_0^1 \left\vert \frac{\mbox{d}\widehat{h}}{\mbox{d}s}(\tau s) \right\vert d\tau.
\end{equation*}
Hence, we have the following estimate for (\ref{E_J}).
\begin{align*}
\left\vert E_J(\vert{\bm x}\vert) \right\vert &= \left\vert\int_0^\infty \frac{1}{s} \left( 1 - 2\pi\widehat{h}(s) \right)\left( 1 + 2\pi\widehat{h}(s) \right) J_0 \left( s \vert{\bm x}\vert \right) ds \right\vert \\
&\leq 2 \pi \int_0^\infty  \int_0^1 \left\vert \frac{\mbox{d}\widehat{h}}{\mbox{d}s}(\tau s) \right\vert d\tau \left( 1 + 2\pi\left\vert \widehat{h}(s) \right\vert \right) \left\vert J_0 \left( s \vert{\bm x}\vert \right) \right\vert ds \\
&=  \frac{2 \pi}{\vert{\bm x}\vert} \int_0^\infty  \int_0^1 \left\vert \frac{\mbox{d}\widehat{h}}{\mbox{d}s}\left( \frac{u}{\vert{\bm x}\vert}\tau \right) \right\vert d\tau \left( 1 + 2\pi\left\vert \widehat{h}\left( \frac{u}{\vert{\bm x}\vert} \right) \right\vert \right) \left\vert J_0 \left( u \right) \right\vert du.
\end{align*}
Setting the constant $1 /2 < \alpha < 1$, we obtain
\begin{align*}
\left\vert E_J(\vert{\bm x}\vert) \right\vert &\leq  \frac{2 \pi}{\vert{\bm x}\vert} \left\Vert \chi_{\alpha} \frac{\mbox{d}\widehat{h}}{\mbox{d}s} \right\Vert_{L^\infty} \int_0^1 \tau^{-\alpha}d\tau \int_0^\infty \frac{\vert{\bm x}\vert^\alpha}{u^\alpha}   \left( 1 + 2\pi\left\vert \widehat{h}\left( \frac{u}{\vert{\bm x}\vert} \right) \right\vert \right) \left\vert J_0 \left( u \right) \right\vert du \\
&\leq  \frac{c}{\vert{\bm x}\vert^{1 - \alpha}} \left\Vert \chi_{\alpha} \frac{\mbox{d}\widehat{h}}{\mbox{d}s} \right\Vert_{L^\infty} \left( 1 + \Vert \widehat{h} \Vert_{L^\infty} \right) \int_0^1 \tau^{-\alpha}d\tau \int_0^\infty \frac{1}{u^\alpha} \left\vert J_0 \left( u \right) \right\vert du,
\end{align*}
in which the rightmost integral with respect $u$ is well-defined owing to $J_0(0)=1$ and $J_0(r) \sim r^{-1/2}$ as $r \rightarrow \infty$. Note that it follows from the inequality $\vert{\bm x}\vert^\alpha \leq 1 + \vert{\bm x}\vert $ for $1/2 < \alpha < 1$ that
\begin{equation}
\left\Vert \chi_{\alpha} \frac{\mbox{d}\widehat{h}}{\mbox{d}s} \right\Vert_{L^\infty} \leq \left\Vert \frac{\mbox{d}\widehat{h}}{\mbox{d}s} \right\Vert_{L^\infty} + \left\Vert \chi_1 \frac{\mbox{d}\widehat{h}}{\mbox{d}s} \right\Vert_{L^\infty} \leq \frac{1}{2 \pi} \left( 2 \Vert h \Vert_{L^1} + \Vert \chi_1 h \Vert_{L^1} + \Vert \chi_1 \nabla h \Vert_{L^1} \right).
\end{equation}
Hence, we have
\begin{equation*}
\left\vert E_J(L_{mn}) \right\vert \leq  \frac{c}{L_{mn}^{1 - \alpha}} \left( \Vert h \Vert_{L^1} + \Vert \chi_1 h \Vert_{L^1} + \Vert \chi_1 \nabla h \Vert_{L^1} \right) \left( 1 + \Vert \widehat{h} \Vert_{L^\infty} \right).
\end{equation*}
Combining the estimates for $H_G(r)$ and $E_J(r)$ and considering (\ref{L-asympt}), we conclude that $E_{mn}(\tau)$ vanishes at infinity with the order $\mathcal{O}(\vert\tau\vert^{ - (1 -\alpha)/2})$, $1 /2 < \alpha < 1$ so that $E_0(\tau)$ decays with the same order.

We finally show that $\mbox{d}E_{mn}/\mbox{d}t$ is integrable on $\mathbb{R}$. Regarding the derivative of $H_G(r)$, it follows from $ \vert \mbox{d}H_G (\vert{\bm x} \vert)/\mbox{d}r \vert = \vert \nabla H_G(\vert {\bm x} \vert)\vert$ that 
\begin{align*}
\left\vert \frac{\mbox{d}H_G}{\mbox{d}r}(\vert{\bm x}\vert) \right\vert &\leq \int_{\mathbb{R}^2} \left\vert \frac{{\bm x} - {\bm y}}{\vert{\bm x} - {\bm y}\vert^2} - \frac{{\bm x}}{\vert{\bm x}\vert^2} \right\vert h({\bm y}) d{\bm y} = \frac{1}{\vert{\bm x}\vert} \int_{\mathbb{R}^2} \frac{\vert{\bm y}\vert}{\vert{\bm x} - {\bm y}\vert} h({\bm y}) d{\bm y} \\
&\leq \frac{1}{\vert{\bm x}\vert^2} \int_{\mathbb{R}^2} \left( 1 + \frac{\vert{\bm y}\vert}{\vert{\bm x} - {\bm y}\vert} \right) \vert{\bm y}\vert h({\bm y}) d{\bm y} \leq \frac{1}{\vert{\bm x}\vert^2} \left[ \Vert\chi_1 h \Vert_{L^1} + \int_{\mathbb{R}^2} \frac{\vert{\bm y}\vert^2}{\vert{\bm x} - {\bm y}\vert} h({\bm y}) d{\bm y} \right]. 
\end{align*}
The second term in the right-hand side is estimated as follows.
\begin{align*}
\int_{\mathbb{R}^2} \frac{\vert{\bm y}\vert^2}{\vert{\bm x} - {\bm y}\vert} h({\bm y}) d{\bm y} &\leq \Vert \chi_2 h \Vert_{L^\infty} \left[ \int_{\vert{\bm x} - {\bm y}\vert \leq 1} \frac{1}{\vert{\bm x} - {\bm y}\vert} d{\bm y} + \int_{\vert{\bm x} - {\bm y}\vert > 1 \, \cap \, \vert{\bm y}\vert \leq 1} \frac{1}{\vert{\bm x} - {\bm y}\vert} d{\bm y} \right] \\
&\quad  + \Vert \chi_{3 + \eta} h \Vert_{L^\infty} \left[ \int_{\vert{\bm x} - {\bm y}\vert >  \vert{\bm y}\vert > 1} \frac{1}{\vert{\bm y}\vert^{2 + \eta}}  d{\bm y} +  \int_{ \vert{\bm y}\vert > \vert{\bm x} - {\bm y}\vert > 1} \frac{1}{\vert{\bm x} - {\bm y}\vert^{2 + \eta}}  d{\bm y} \right] \\
&\leq c \left( \Vert \chi_2 h \Vert_{L^\infty} +  \Vert \chi_{3 + \eta} h \Vert_{L^\infty} \right).
\end{align*}
We  thus obtain
\begin{equation*}
\left\vert \frac{\mbox{d}H_G}{\mbox{d}r}(\vert{\bm x}\vert) \right\vert \leq \frac{c}{\vert{\bm x}\vert^2} \left( \Vert\chi_1 h \Vert_{L^1} + \Vert \chi_2 h \Vert_{L^\infty} +  \Vert \chi_{3 + \eta} h \Vert_{L^\infty} \right), 
\end{equation*}
and we know that right-hand side is finite owing to the assumptions of $h$. In order to estimate the derivative of $E_J(r)$, we rewrite (\ref{E_J}) by
\begin{align*}
E_J(\vert{\bm x}\vert) = \int_0^\infty \frac{1}{u} \left( 1 - \left\vert 2\pi\widehat{h}\left( \frac{u}{\vert{\bm x}\vert} \right) \right\vert^2 \right) J_0 \left( u \right) du.
\end{align*}
Its derivative is then expressed by
\begin{align*}
\frac{\mbox{d}E_J}{\mbox{d}r}(\vert{\bm x}\vert) = \frac{8 \pi^2}{\vert{\bm x}\vert^2} \int_0^\infty \widehat{h}\left( \frac{u}{\vert{\bm x}\vert} \right) \frac{\mbox{d}\widehat{h}}{\mbox{d}s}\left( \frac{u}{\vert{\bm x}\vert} \right) J_0 \left( u \right) du.
\end{align*}
Similar to the calculation estimating $E_J(r)$, we have
\begin{align*}
\left\vert \frac{\mbox{d}E_J}{\mbox{d}r}(\vert{\bm x}\vert) \right\vert &= \frac{8 \pi^2}{\vert{\bm x}\vert^{2-\alpha}} \int_0^\infty  \left\vert \widehat{h}\left( \frac{u}{\vert{\bm x}\vert} \right) \right\vert \left( \frac{u}{\vert{\bm x}\vert} \right)^\alpha \left\vert \frac{\mbox{d}\widehat{h}}{\mbox{d}s}\left( \frac{u}{\vert{\bm x}\vert} \right) \right\vert \frac{1}{u^\alpha} \left\vert J_0 \left( u \right) \right\vert du \\
&\leq \frac{c}{\vert{\bm x}\vert^{2 - \alpha}} \Vert \widehat{h} \Vert_{L^\infty} \left\Vert \chi_\alpha \frac{\mbox{d}\widehat{h}}{\mbox{d}s} \right\Vert_{L^\infty} \leq \frac{c}{\vert{\bm x}\vert^{2 - \alpha}} \left( \Vert h \Vert_{L^1} + \Vert \chi_1 h \Vert_{L^1} + \Vert \chi_1 \nabla h \Vert_{L^1} \right) \Vert \widehat{h} \Vert_{L^\infty}, 
\end{align*}
in which $1 /2 < \alpha < 1$. Summarizing the estimates and (\ref{L-asympt}), we find 
\begin{align*}
\left\vert \frac{\mbox{d} E_{mn}}{\mbox{d}\tau}(\tau) \right\vert &= \left\vert \left(- \frac{\mbox{d} H_G}{\mbox{d}r}(L_{mn}(\tau)) + \frac{\mbox{d} E_J}{\mbox{d}r}(L_{mn}(\tau)) \right)\frac{1}{2 L_{mn}(\tau)} \right\vert \left\vert \frac{\mbox{d} L_{mn}^2}{\mbox{d} \tau}(\tau)\right\vert \\
& \leq \frac{c}{(L_{mn}(\tau))^{3 - \alpha}} \sim \mathcal{O}(\vert\tau\vert^{- 1 - (1 - \alpha) /2 }), \qquad \tau \rightarrow \pm \infty.
\end{align*}
Consequently,  $\mbox{d} E_{mn} / \mbox{d}\tau$ and $\mbox{d} E_0 / \mbox{d}\tau$ are integrable on $\mathbb{R}$.

\subsection{Proof of Corollary~\ref{Dissipation}}
\label{proof-cor}
In order to show $z_0 > 0$, it is enough to prove that $\mathscr{Z}_0$ is a positive function. Note that, owing to (\ref{condi1}), $\mathscr{Z}_0$ is rewritten by
\begin{equation*}
\mathscr{Z}_0 = \frac{\Gamma_1 \Gamma_2}{2 \pi} \left( \frac{\Gamma_2}{\Gamma_1 + \Gamma_2} Z( L_{23}^2) + \frac{\Gamma_1}{\Gamma_1 + \Gamma_2} Z(L_{31}^2) - Z(L_{12}^2) \right)
\end{equation*}
and (\ref{M}) is equivalent to 
\begin{equation*}
\frac{\Gamma_2}{\Gamma_1 + \Gamma_2} L_{23}^2 + \frac{\Gamma_1}{\Gamma_1 + \Gamma_2} L_{31} = L_{12}^2 - \frac{M}{\Gamma_1 \Gamma_2}.
\end{equation*}
Since $Z(r)$ is monotone decreasing and downward-convex, it follows from $M \geq 0$ that 
\begin{equation*}
\mathscr{Z}_0 > \frac{\Gamma_1 \Gamma_2}{2 \pi} \left[ Z\left( L_{12}^2 - \frac{M}{\Gamma_1 \Gamma_2} \right) - Z(L_{12}^2) \right] \geq 0.
\end{equation*}
Hence, we  easily obtain the result for $M \geq 0$ as desired. To see the case of $M < 0$, let us introduce the following notations,
\begin{equation*}
L = L_{12}^2, \qquad \mu = \frac{L_{23}^2}{L_{12}^2}, \qquad \widehat{\mu} = \frac{L_{31}^2}{L_{12}^2}.
\end{equation*}
According to Lemma~\ref{RangeH-}, under the conditions $M < 0$ and $\mathscr{H} < 0$, either $\mu < 1 < \widehat{\mu}$ or $\widehat{\mu} < 1 < \mu$ holds true for all time.
 We here consider the case $\mu < 1 < \widehat{\mu}$. As for the other case, we can show the fact similarly. Note that  (\ref{M}) implies
\begin{equation*}
\widehat{\mu} = \widehat{\mu}(\mu) = - \frac{\Gamma_2}{\Gamma_1} \mu + \frac{1}{\Gamma_3 \Gamma_1 L} \left( M - \Gamma_1 \Gamma_2 L \right).
\end{equation*}
For any fixed $L > 0$ and $M < 0$, we rewrite $\mathscr{Z}_0$ as the function of $\mu$.
\begin{equation*}
\mathscr{Z}_0 = Z_{L, M}(\mu) = \frac{\Gamma_1 \Gamma_2}{2 \pi} \left( \frac{\Gamma_2}{\Gamma_1 + \Gamma_2} Z( \mu L) + \frac{\Gamma_1}{\Gamma_1 + \Gamma_2} Z( \widehat{\mu}(\mu) L) - Z(L) \right).
\end{equation*}
Then, we have
\begin{equation*}
\frac{\mbox{d} Z_{L,M}}{\mbox{d} \mu} (\mu) = \frac{\Gamma_1 \Gamma_2^2 L}{2 \pi (\Gamma_1 + \Gamma_2)} \left( Z^\prime( \mu L) - Z^\prime(\widehat{\mu}(\mu) L)  \right) < 0,
\end{equation*}
since $Z^\prime(r)$ is negative and monotone increasing. Similarly, since the Hamiltonian (\ref{H}) is expressed by 
\begin{equation*}
\mathscr{H} = H_{L,M}(\mu) = \frac{\Gamma_1 \Gamma_2}{2 \pi} \left( \frac{\Gamma_2}{\Gamma_1 + \Gamma_2} H_P( \mu L) + \frac{\Gamma_1}{\Gamma_1 + \Gamma_2} H_P( \widehat{\mu}(\mu) L) - H_P(L) \right),
\end{equation*}
we have
\begin{equation*}
\frac{\mbox{d} H_{L,M}}{\mbox{d} \mu} (\mu) = \frac{\Gamma_1 \Gamma_2^2 L}{4 \pi (\Gamma_1 + \Gamma_2)} \left( L_P( \mu L) - L_P( \widehat{\mu}(\mu) L)  \right) > 0
\end{equation*}
 owing to (\ref{dL_p}). Hence, it is sufficient to show that $Z_{L,M}(\mu_0) > 0$ for the constant $\mu_0$ satisfying $H_{L,M}(\mu_0) = 0$. Indeed, since $\mathscr{H} = H_{L,M}(\mu) < 0$ is equivalent to $\mu < \mu_0$, we obtain $\mathscr{Z}_0 = Z_{L, M}(\mu) > Z_{L, M}(\mu_0) > 0$ for $\mu < \mu_0$, i.e. $\mathscr{Z}_0$ is always positive when $\mathscr{H}$ is negative.

Since $H_P(r)$ is continuous and monotone increasing, as shown in Appendix~\ref{functions}.4, there exists the inverse function $H_P^{-1}(r)$ so that the function  
$Z_H(r) = Z(H_P^{-1}(r))$ is well-defined. Then, the first and second derivatives of $Z_H(r)$ are expressed by
\begin{align*}
\frac{\mbox{d}Z_H}{\mbox{d}r}(r) &=  \frac{Z^\prime(H_P^{-1}(r))}{H_P^\prime(H_P^{-1}(r))}, \\
\frac{\mbox{d}^2 Z_H}{\mbox{d}r^2}(r) &= \frac{Z^{\prime\prime}(H_P^{-1}(r))}{\left( H_P^{\prime}(H_P^{-1}(r)) \right)^2} - Z^\prime(H_P^{-1}(r)) \frac{ H^{\prime\prime}_P(H_P^{-1}(r))}{\left( H_P^\prime(H_P^{-1}(r)) \right)^3} = \frac{DZ(H_P^{-1}(r))}{\left( H_P^\prime(H_P^{-1}(r)) \right)^3}, 
\end{align*}
where $DZ(r) = Z^{\prime\prime}(r) H_P^\prime(r) - Z^\prime(r) H^{\prime\prime}_P(r)$. Note that $H^\prime_P(r) = L_P(r) / 2 $ is a positive function. Owing to (\ref{condi-dissipation}) and positivity of $H^\prime_P(r)$, we find $\mbox{d}^2Z_H / \mbox{d}r^2 > 0$ so that $Z_H(r)$ is downward-convex. Hence, considering $H_{L,M}(\mu_0) = 0$, namely  
\begin{equation*}
 \frac{\Gamma_2}{\Gamma_1 + \Gamma_2} H_P( \mu_0 L) + \frac{\Gamma_1}{\Gamma_1 + \Gamma_2} H_P( \widehat{\mu}(\mu_0) L) = H_P(L),
\end{equation*}
we obtain
\begin{align*}
Z_{L, M}(\mu_0) &= \frac{\Gamma_1 \Gamma_2}{2 \pi} \left[ \frac{\Gamma_2}{\Gamma_1 + \Gamma_2} Z_H\left( H_P(\mu_0 L) \right) + \frac{\Gamma_1}{\Gamma_1 + \Gamma_2} Z_H\left( H_P(\widehat{\mu}(\mu_0) L) \right) - Z_H\left( H_P(L) \right) \right] \\
&> \frac{\Gamma_1 \Gamma_2}{2 \pi} \left[ Z_H\left( \frac{\Gamma_2}{\Gamma_1 + \Gamma_2} H_P( \mu_0 L) + \frac{\Gamma_1}{\Gamma_1 + \Gamma_2} H_P( \widehat{\mu}(\mu_0) L) ) \right) - Z_H\left( H_P(L) \right) \right] \\
&= \frac{\Gamma_1 \Gamma_2}{2 \pi} \left[ Z_H\left( H_P(L) ) \right) - Z_H\left( H_P(L) \right) \right] = 0.
\end{align*}
Since we assume $\mathscr{H} <0$, we achieve the conclusion.

\section{Applications to various smoothing functions}
\label{VBM}
We apply the main results to some smoothing functions for the Euler-Poinvar\'{e} models to show the existence of the anomalous enstrophy dissipation via the triple collapse.
\subsection{Gaussian kernel}
The simplest smoothing function for the Euler-Poincare model is the Gaussian kernel, which is given by
\begin{equation}
h({\bm x}) = \frac{1}{\pi} e^{- \vert {\bm x} \vert^2}. \label{gauss}
\end{equation}
It is easy to confirm that $h$ satisfies the assumptions of Theorem~\ref{thm-dissipation}. In order to prove $z_0 > 0$, we show that the functions $Z$ and $H_p$, which are derived from $h$ as (\ref{Z}) and (\ref{H_P}) respectively, satisfy the assumptions of Corollary~\ref{Dissipation}. The function $Z(r)$ is monotone decreasing and downward convex, since we have
\[
Z(r) = \int_{\mathbb{R}^2} h({\bm x} - {\bm y}) h({\bm y}) d{\bm y} =\frac{1}{\pi^2}  e^{ - \frac{1}{2}\vert {\bm x} \vert^2} \int_{\mathbb{R}^2} e^{ - 2 \vert {\bm y} - \frac{1}{2}{\bm x} \vert^2} d{\bm y}=\frac{1}{2 \pi}  e^{ - \frac{1}{2}\vert {\bm x} \vert^2} = \frac{1}{2 \pi}  e^{ - \frac{1}{2} r},
\]
in which $r = \vert {\bm x} \vert^2$. Regarding the sufficient condition (\ref{condi-dissipation}), since 
\begin{equation*}
Z^{\prime\prime}(r) H^\prime_P(r) - Z^\prime(r) H^{\prime\prime}_P(r) = \frac{1}{8 \pi} \mbox{e}^{-\frac{1}{2}r}\left( H_P^\prime(r) + 2 H_P^{\prime\prime}(r) \right) \equiv \frac{1}{8 \pi} \mbox{e}^{-\frac{1}{2}r} DH_P(r),
\end{equation*}
it is enough to show that $DH_P(r)$ is a positive function. Owing to (\ref{Apdx_P_K}) and (\ref{Apdx_H_P}), $DH_P(r)$ is expressed by
\[
DH_P(r) = \frac{1}{2 r^2} \left( (r -2)P_K(\sqrt{r}) + 2 r e^{- r} \right) \equiv \frac{1}{2 r^2} p_0(r).
\]
It follows from $p_0^\prime(r) = P_K(\sqrt{r}) - r e^{-r}$ and $p_0^{\prime\prime}(r) = r e^{-r} > 0$ with $p_0^{\prime}(0) = 0$ that 
$p_0^{\prime}$ is positive  and so $p_0$ is monotone increasing. In addition, owing to $p_0(0) = 0$, we find $p_0 > 0$ and thus $DH_P > 0$. Therefore, we conclude that the enstrophy dissipates in the EP-PV system with (\ref{gauss}).

\subsection{The vortex blob system}
We consider the vortex blob regularization as another important example of the Euler-Poincar\'{e} models, in which the smoothing function $h^\sigma$ is given by
\begin{equation*}
h^\sigma({\bm x}) = \frac{1}{\sigma^2} h_b \left( \frac{{\bm x}}{\sigma} \right), \qquad h_b({\bm x}) = \frac{1}{\pi (\vert{\bm x}\vert^2 + 1)^2}.
\end{equation*} 
Since $h_b$ satisfies the assumptions of Theorem~\ref{thm-dissipation}, the enstrophy variation   in the three point vortex problem of the $\sigma$PV system converges to the $\delta$-measure with  the mass of $-z_0$ in the $\sigma \rightarrow 0$ limit. In what follows, we confirm numerically that the constant $z_0$ is strictly positive. For $r = \vert{\bm x}\vert^2$, the function $Z(r)$ is given by
\begin{align*}
Z(r)  &= \int_{\mathbb{R}^2} h_b({\bm x} - {\bm y}) h_b({\bm y}) d{\bm y} = \frac{1}{\pi^2}\int_{\mathbb{R}^2} \frac{1}{\left( \vert{\bm x} - {\bm y}\vert^2 + 1 \right)^2} \frac{1}{\left( \vert{\bm y}\vert^2 + 1 \right)^2} d{\bm y} \\
&= \frac{1}{\pi^2} \int_0^\infty \frac{s}{(s^2 + 1)^2} \int_0^{2\pi} \frac{1}{\left( r + s^2 + 1 - 2 r^{1/2} s \cos{\theta} \right)^2}  d\theta ds \\
&= \frac{2}{\pi} \int_0^\infty \frac{s}{(s^2 + 1)^2} \frac{r + s^2 + 1}{\left( \left( r + s^2 + 1 \right)^2 - 4 r s^2 \right)^{3/2}} ds = \frac{1}{\pi} \int_1^\infty \frac{1}{s^2} \frac{r + s}{\left( \left( r - s \right)^2 + 4 r \right)^{3/2}} ds. 
\end{align*}
As shown in Figure~\ref{Z_delta}, since 
\begin{align*}
Z^\prime(r) &= - \frac{2}{\pi} \int_1^\infty \frac{1}{s^2} \frac{r^2 + (s + 1)r - 2 s^2 + 3s}{\left( \left( r - s \right)^2 + 4 r \right)^{5/2}} ds < 0, \\
Z^{\prime\prime}(r) &= \frac{6}{\pi} \int_1^\infty \frac{1}{s^2} \frac{ r^3 + (s + 2)r^2 + (-5s^2 +6s +2)r + 3s^3 -12s^2 + 10s}{\left( \left( r - s \right)^2 + 4 r \right)^{7/2}} ds >0,
\end{align*}
$Z(r)$ is monotone decreasing and downward-convex.  Hence, we conclude that $z_0$ is positive for $M \geq 0$.

\begin{figure}[tbhp]
\begin{tabular}{c}
\begin{minipage}{0.45\hsize}
\includegraphics[scale=0.5]{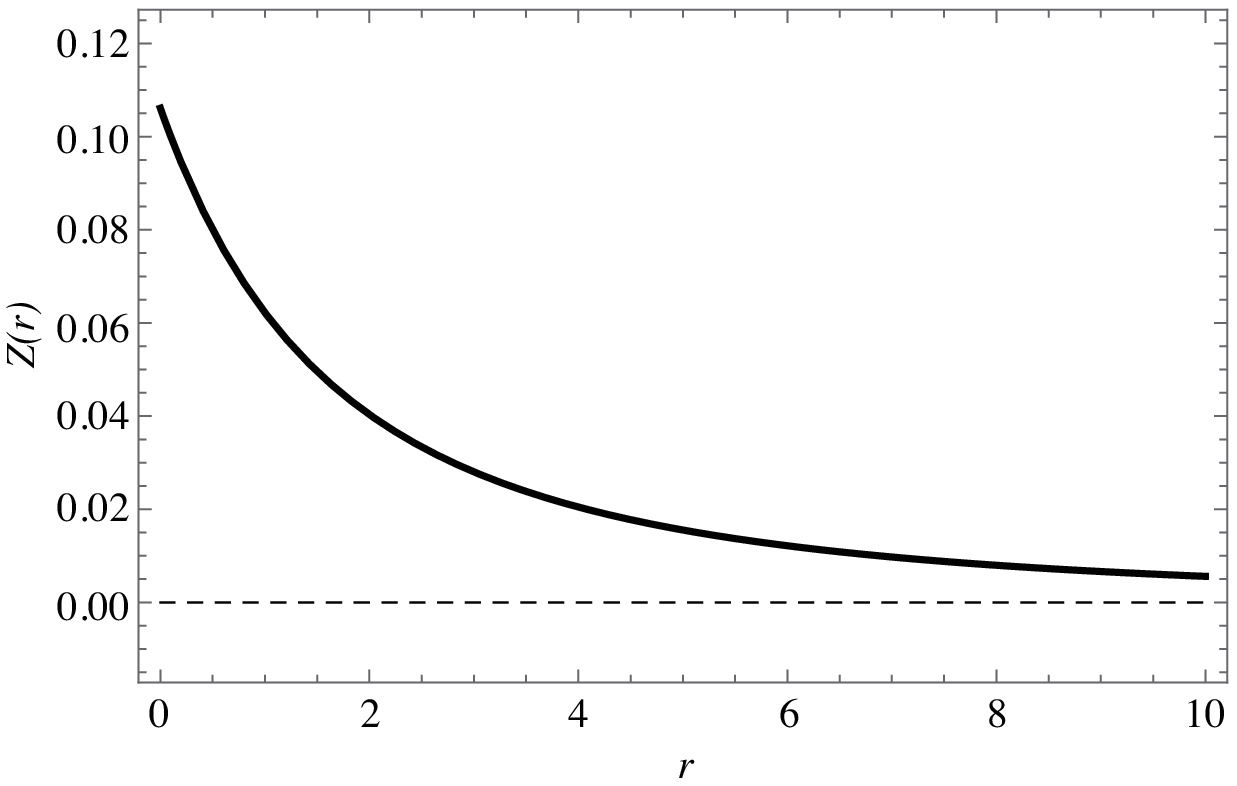}
\end{minipage}
\begin{minipage}{0.45\hsize}
\includegraphics[scale=0.5]{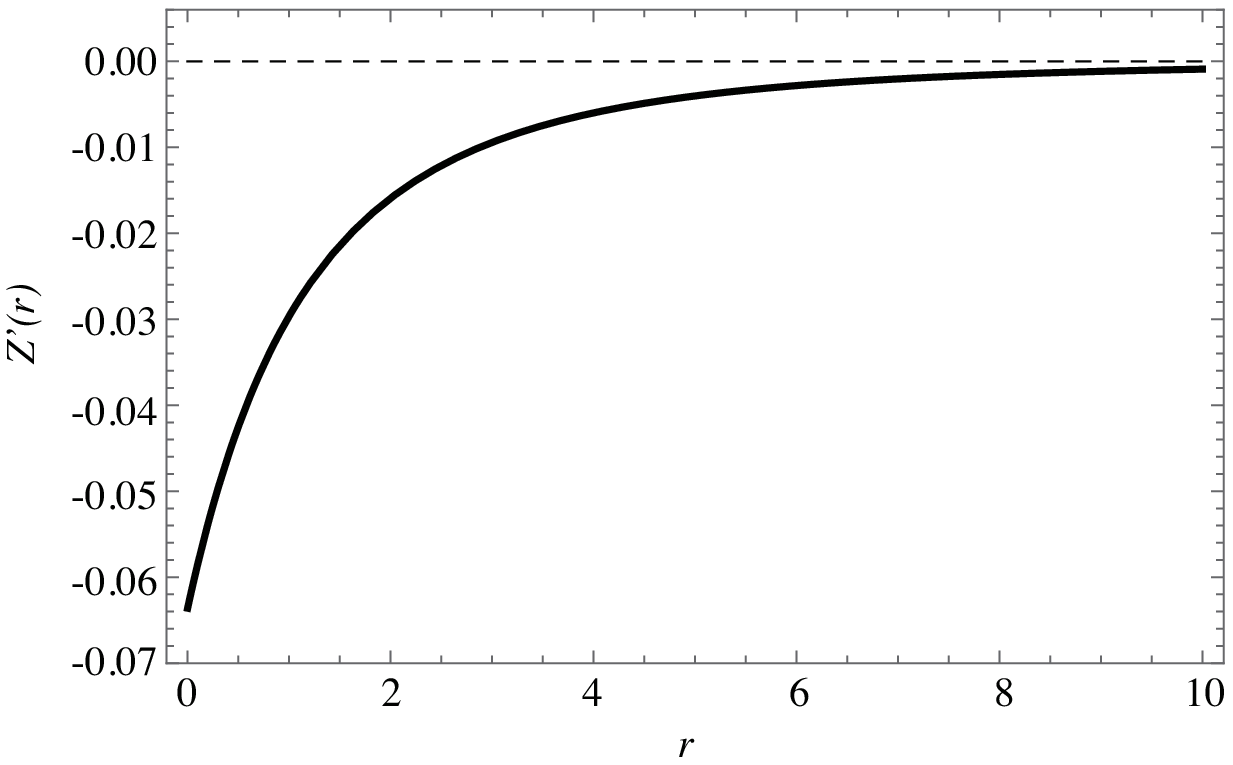}
\end{minipage}
 \\ 
\begin{minipage}{0.45\hsize}
\includegraphics[scale=0.5]{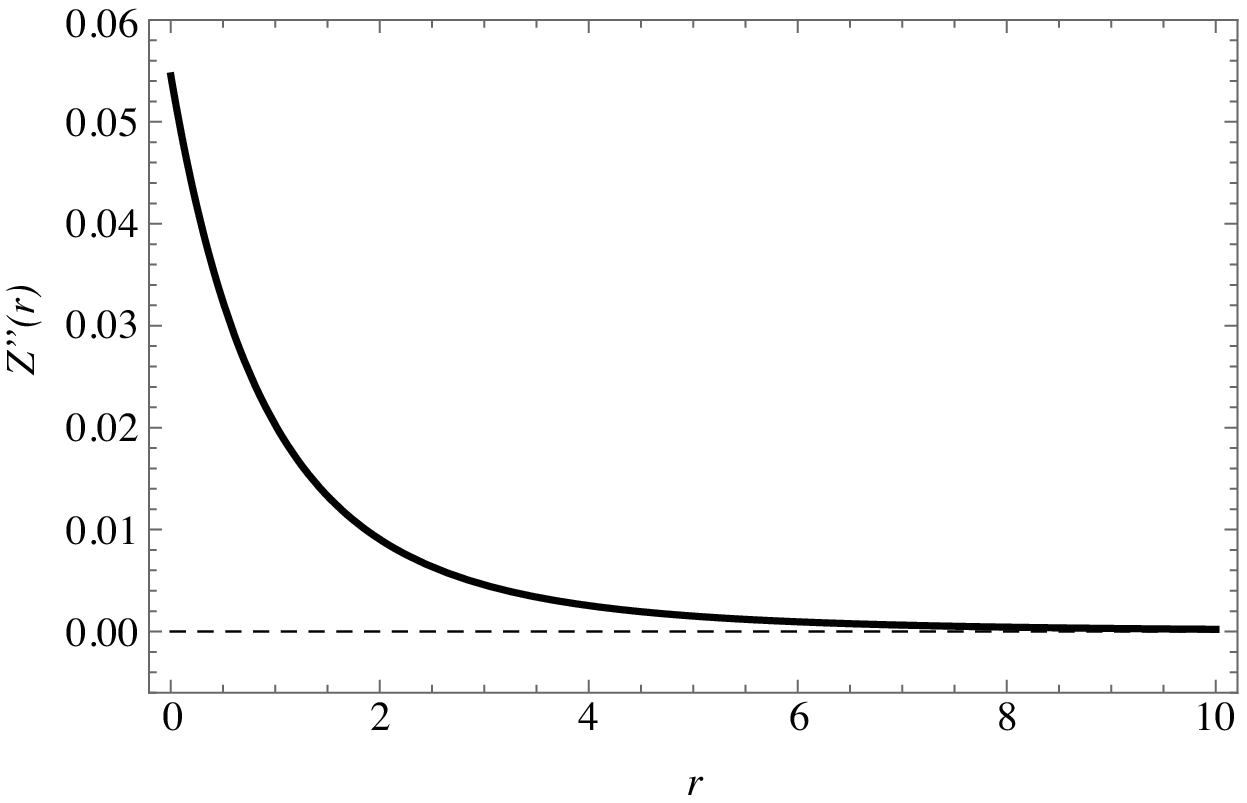}
\end{minipage}
\begin{minipage}{0.45\hsize}
\includegraphics[scale=0.5]{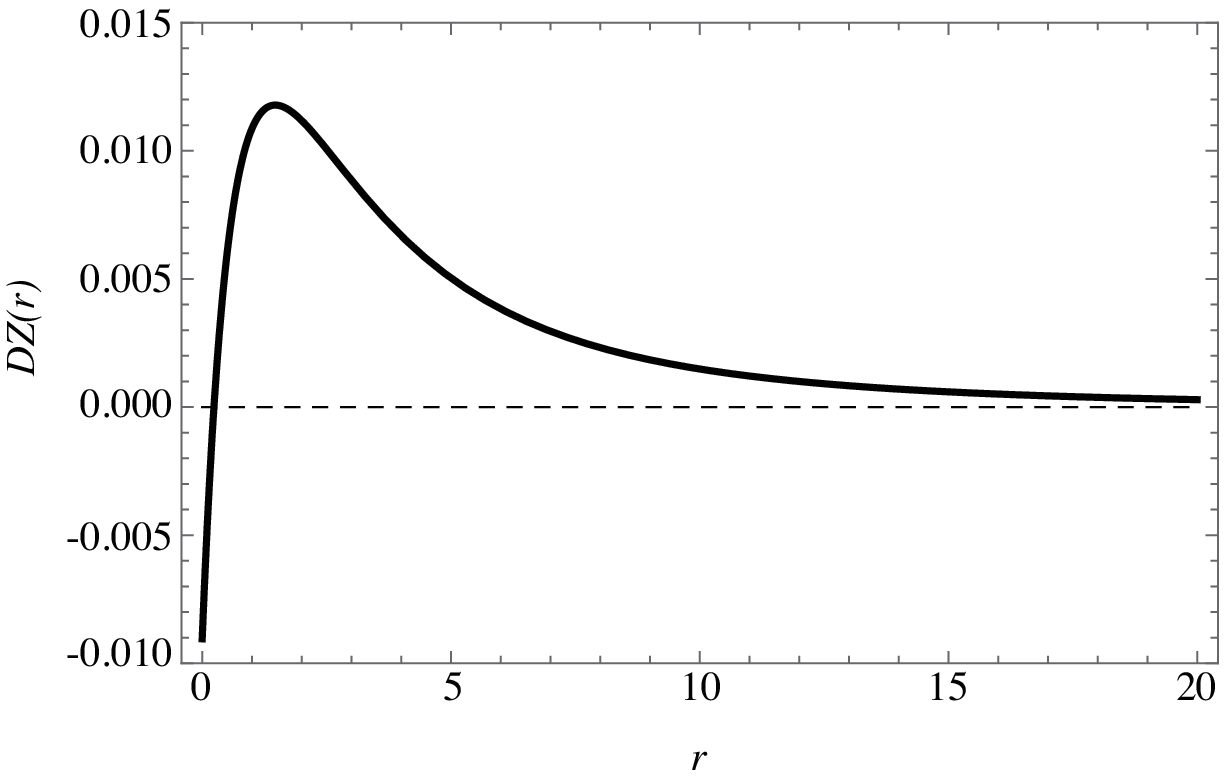}
\end{minipage}
\end{tabular}
\caption{Plots of the functions $Z(r)$, $Z^\prime(r)$, $Z^{\prime\prime}(r)$ and $DZ(r)$ for the vortex-blob regularization.}
\label{Z_delta}
\end{figure}

Next, we see the case of $M < 0$. In the vortex blob method, the Hamiltonian is expressed by
\begin{equation*}
\mathscr{H}^\sigma = - \frac{1}{2 \pi} \sum_{n=1}^N \sum_{m=n+1}^N \Gamma_n \Gamma_m \log{\sqrt{\left(l_{mn}^\sigma\right)^2 + 1}},  
\end{equation*}
and thus $H_P(r) = \log{\sqrt{r + 1}}$. The condition (\ref{condi-dissipation}) is then equivalent to
\begin{equation*}
DZ(r) \equiv Z^{\prime\prime}(r) (r + 1) + Z^\prime(r) > 0.
\end{equation*}
Figure~\ref{Z_delta} also shows the graph of $DZ(r)$, where there exists $r_0>0$ such that $DZ(r_0) = 0$ and $DZ(r) > 0$ for $r > r_0$. On the other hand, according to Lemma~\ref{asympt-}, when three $\sigma$-point vortices with any $M < 0$ and $\mathscr{H} < 0$ starts from any collinear configuration at the initial moment,  the distance $L_{mn}$ achieves its minimal value. That is to say, if we consider the initial data satisfying $L^2_{mn}(0) > r_0$ for any $m \neq n$, then $DZ(L_{mn})$ is always positive throughout the time evolution. Thus, owing to Corollary~\ref{Dissipation}, we obtain $z_0 > 0$.

\section{Concluding Remarks}
\label{concluding}

We have introduced the EP-PV system describing the evolution of $\varepsilon$-point vortices in the Euler-Poincar\'{e} models and proven the existence of the evolution of the three $\varepsilon$-point vortices whose enstrophy varies at the triple collapse in the sense of distributions in the $\varepsilon \rightarrow 0$ limit. Moreover, we give a sufficient condition for the existence of the anomalous enstrophy dissipation via the triple collapse. All conditions are described in terms of the radial smoothing function $h({\bm x})=h_r(\vert{\bm x}\vert) \in C^1(\mathbb{R}^2) \cap W_1^1(\mathbb{R}^2)$; There exists a self-similar collapsing orbit of three $\varepsilon$-point vortices with the distributional enstrophy variation in the limit of $\varepsilon \rightarrow 0$, if $h$ is monotone decreasing ($h_r^\prime <0$) and it satisfies the logarithmic singularity condition in the neighborhood of the origin $\chi_{\log}^{+} h \in L^\infty(\mathbb{R}^2)$ and the decay rate conditions at infinity $\chi_1 \nabla h \in L^\infty(\mathbb{R}^2)$ and $\chi_{3+\eta} h \in L^\infty(\mathbb{R}^2)$ with $\eta >0$. In addition, the sufficient condition for the enstrophy variation being dissipative is described in terms of  $Z(r)$ and $H_P(r)$ that are defined from the smoothing function $h$. Those conditions are  applicable to many smoothing functions including the Euler-$\alpha$ model, the Gaussian model and the vortex-blob model as confirmed in \cite{G.2} and Section~\ref{VBM}. Hence, we conclude that the anomalous enstrophy dissipation via the collapse of three point vortices is universally constructed within the framework of the Euler-Poincar\'{e} models.

Let us finally mention the future direction. It is interesting to investigate the enstrophy variation together with the evolution of many $\varepsilon$-point vortices. According to \cite{Kimura}, the $N$ point vortices in the PV system can collapse self-similarly in finite time under certain circumstances. Thus, there is a possibility of obtaining the enstrophy dissipation by considering the collapse of the $N$ vortex problem in the EP-PV system. As a matter of fact, for the $\alpha$PV system, the enstrophy dissipation has been  observed numerically in \cite{G.1} via a quadruple self-similar collapse as $\alpha \rightarrow 0$. However, since the EP-PV system is not integrable for $N \geq 4$ in general, it is not an easy task to prove this. Further mathematical analysis is required.

\appendix
\section{Properties of auxiliary functions}
\label{functions}

We introduce some functions associated with a given smoothing function $h_r(r)$ that is positive and monotone decreasing.
Here, we show the properties of those functions that are essentially used in the proofs of the main results in the same way as
in \cite{G.2}. See also in Table~\ref{Tab_Function}.

\paragraph{1. The function $P_K(r)$} The function $P_K(r)$ defined by (\ref{P_K})
is monotone increasing and upward-convex. Note that the derivative of $P_K(r)$ is expressed by 
\begin{equation*}
\frac{\mbox{d}}{\mbox{d}r} P_K(r) = 2 \pi \frac{\mbox{d}}{\mbox{d}r} \left( - r \frac{\mbox{d}G_r^1}{\mbox{d}r} (r) \right) = 2 \pi r h_r(r),
\end{equation*}
since $G^1_r$ is a radial function and satisfies   $- \Delta G_r^1(\vert{\bm x}\vert) = h_r(\vert{\bm x}\vert)$. Hence, it follows that
\begin{equation}
\frac{\mbox{d}}{\mbox{d}r} P_K(\sqrt{r}) = \frac{1}{2 \sqrt{r}} \frac{\mbox{d} P_K}{\mbox{d}r}(\sqrt{r}) = \pi h_r(\sqrt{r}) > 0, \qquad 
\frac{\mbox{d}^2}{\mbox{d}r^2} P_K(\sqrt{r}) = \frac{\pi}{2 \sqrt{r}} h^\prime_r(\sqrt{r}) < 0.
\label{Apdx_P_K}
\end{equation}
Moreover, as we see in \cite{G.3}, $P_K(r)$ satisfies
\begin{equation*}
P_K(0) = 0, \qquad \lim_{r \rightarrow \infty} P_K(r) = \int_{\mathbb{R}^2} h({\bm x}) d{\bm x} =  1,
\end{equation*}
and we thus have $0 \leq P_K(r) < 1$.

\paragraph{2. The function $L_P(r)$} The function $L_P(r)$ defined by (\ref{L_P}) 
is  monotone decreasing. Indeed, its derivative is given by 
\begin{align*}
\frac{\mbox{d}}{\mbox{d}r} L_P(r) = - \frac{1}{r^2} P_K(\sqrt{r}) + \frac{\pi}{r} h_r(\sqrt{r}) \equiv - \frac{1}{r^2} l_0 (r), 
\end{align*}
where $l_0 (r) = P_K(\sqrt{r}) - \pi r h_r(\sqrt{r})$. Since it follows that
\begin{equation*}
\frac{\mbox{d}}{\mbox{d}r} l_0 (r) = - \frac{\pi}{2} \sqrt{r} h^\prime_r(\sqrt{r}) > 0, \qquad l_0 (0) = P_K(0) = 0, 
\end{equation*}
we find that $l_0(r)$ is a positive function and thus $L_P(r)$ is monotone decreasing.

\paragraph{3. The function $H_G(r)$} The function $H_G(r)$ defined by (\ref{H_G})
 is monotone decreasing and downward-convex. Owing to $0 \leq P_K(r) < 1$, the first and second derivatives of $H_G(\sqrt{r})$ are given by
\begin{equation*}
\frac{\mbox{d}}{\mbox{d}r} H_G(\sqrt{r})  = - \frac{1}{2 \sqrt{r}} \left( \frac{1}{\sqrt{r}} + 2 \pi \frac{dG_r^1}{dr}(\sqrt{r}) \right) = - \frac{1}{2 r} \left( 1 - P_K(\sqrt{r}) \right) < 0
\end{equation*}
and 
\begin{align*}
\frac{\mbox{d}^2}{\mbox{d}r^2} H_G(\sqrt{r}) & = \frac{1}{2 r^2} \left( 1 - P_K(\sqrt{r}) \right) + \frac{\pi}{2 r} h_r(\sqrt{r}) > 0.
\end{align*}
Note that
\begin{equation}
H_G(r) \sim - \log{r} - 2 \pi  G_r^1(0), \quad r \rightarrow 0, \label{H_G-zero}
\end{equation}
since $G_r^1(0)$ is finite.

\paragraph{4. The function $H_P(r)$} Its definition is given by (\ref{H_P}). It is monotone increasing and upward-convex, since we have
\begin{equation}
\frac{\mbox{d}}{\mbox{d}r} H_P(r) = \frac{1}{2 r} P_K(\sqrt{r}) = \frac{1}{2} L_P(r) > 0, \qquad \frac{\mbox{d}^2}{\mbox{d}r^2} H_P(r) = \frac{1}{2}\frac{\mbox{d}}{\mbox{d}r} L_P(r) < 0.
\label{Apdx_H_P}
\end{equation}

\section*{Acknowledgements}
This work is partially supported by JSPS A3 Foresight Program and Grants-in-Aid for Scientific Research KAKENHI (B) No. 26287023 from JSPS.


\end{document}